\newcommand{\SA}{\ensuremath{\mathsf{SA}}}
\newcommand{\BWT}{\ensuremath{\mathsf{BWT}}}
\newcommand{\from}{\ensuremath{\mathit{from}}}
\newcommand{\set}[1]{\ensuremath{\{ #1 \}}}
\newcommand{\Exp}[1]{\ensuremath{\mathrm{E}\left[ #1 \right]}}
\newcommand{\abs}[1]{\ensuremath{\lvert #1 \rvert}}
\newcommand{\onebit}{$1$\nobreakdash-bit}
\newcommand{\zerobit}{$0$\nobreakdash-bit}
\title{Indexing Finite Language Representation of Population Genotypes}
\author{Jouni Sir{\'e}n\thanks{Funded by the Finnish Doctoral Programme in Computational Sciences, 
Academy of Finland project 1140727 
and the Nokia Foundation.
}
\and
Niko V{\"a}lim{\"a}ki
\thanks{Funded by Helsinki Graduate School in Computer Science and Engineering 
and Finnish Centre of Excellence for Algorithmic Data Analysis Research 
}
\and
Veli M{\"a}kinen
}
\institute{Helsinki Institute for Information Technology (HIIT) \& \\
Department of Computer Science, University of Helsinki, Finland\\
\email{\{jltsiren,nvalimak,vmakinen\}@cs.helsinki.fi}}
\begin{document}

\maketitle

\pagestyle{plain} 

\begin{abstract}
With the recent advances in DNA sequencing, it is now possible to have complete genomes of individuals sequenced and assembled. This rich and focused genotype information can be used to do different population-wide studies, now first time directly on whole genome level.
We propose a way to index population genotype information together with the 
complete genome sequence, so that one can use the index to efficiently align a given sequence to the genome with all plausible genotype recombinations taken into account. This is achieved through converting a multiple alignment of individual genomes into a finite automaton recognizing all strings that can be read from the alignment by switching the sequence at any time. The finite automaton is indexed with an extension of Burrows-Wheeler transform to allow pattern search inside the plausible recombinant sequences. The size of the index stays limited, because of the high similarity of individual genomes. 
The index finds applications in variation calling and in primer design.
On a variation calling experiment, we found about 1.0\% of matches to novel recombinants just with exact matching, and up to 2.4\% with approximate matching.
\end{abstract}

\section{Introduction}\label{sect:introduction}

Due to the advances in DNA sequencing \cite{Met10}, it is now possible to have complete genomes of individuals sequenced and assembled. Already several human genomes have been sequenced \cite{Venetal01,HGC01,LSNetal07,Wheetal08,Lietal10} and it is almost a routine task to resequence individuals by aligning the high-throughput short DNA reads to the reference \cite{FB09}. This rich and focused genotype information, together with the more global genotype information (common single nucleotide polymorphisms (SNPs) and other variations) created using earlier techniques, can be combined to do different population-wide studies, now first time directly on whole genome level. 

We propose a novel index structure that simultaneously represents and extrapolates the genotype information present in the population samples. The index structure is built on a given multiple alignment of individual genomes, or alternatively for a single reference sequence and set of SNPs of interest. The index structure is capable of aligning a given pattern to any path taken along the multiple alignment, as illustrated in Figure~\ref{fig:multiplealignment}.\footnote{It is also possible to limit to only those paths that contain recombination hotspots \cite{Mye05}, to avoid matching to too rare recombinants. For brevity, we cover here only the unrestricted case.}

\begin{figure}[t]
\centerline{\includegraphics{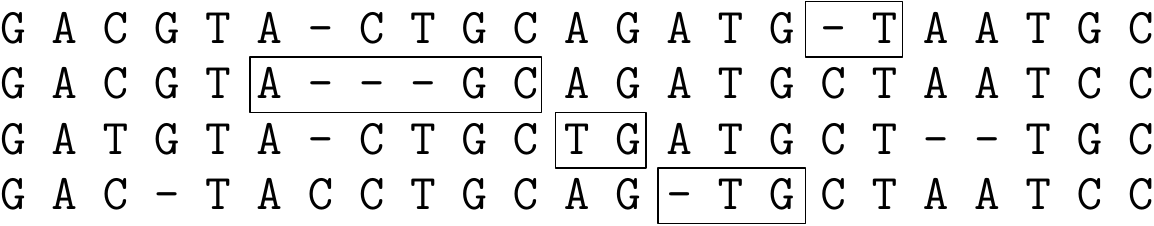}}
\caption{Pattern \texttt{AGCTGTGT} matching the multiple alignment when allowing it to change row when necessary.} 
\label{fig:multiplealignment}
\end{figure}

To build the index, we first create a finite automaton recognizing all paths 
through the multiple alignment, and then generalize Burrows-Wheeler transform 
(BWT) \cite{Burrows1994} \nobreakdash-based self-index structures 
\cite{Navarro2007} to index paths in labeled graphs. The backward search routine 
of BWT\nobreakdash-based indexes generalizes to support exact pattern search over 
the labeled graph in $O(m)$ time, for pattern of length $m$. On general labeled 
graphs, such index can take exponential space, but on graphs resulting from finite 
automaton representation of multiple alignment of individual genomes, 
the space is expected to stay limited.

Applications for our index include the following:
\begin{itemize}

\item \emph{SNP calling}. We can take the known SNPs into account already 
in the short read alignment phase, instead of the common pipeline of alignment, variation 
calling, and filtering of known SNPs. This allows more accurate alignment, as the 
known SNPs are no longer counted as errors, and the matches can represent novel 
recombinants not yet represented in the database. 
Hence, we can expect to output less 
false positive candidates for the postprocessing steps such as for the common realignment 
step that creates a consensus for indels by multiple alignment of nearby mapped reads.
\item \emph{Probe/primer design.} When designing probes for microarrays or primers 
for PCR, it is important that the designed sequence does not occur even 
approximately elsewhere than in the target. Our index can provide 
approximate search not only against all 
substrings, but also against plausible recombinants, and hence the design can be 
made more selective.
\item \emph{Large indel calling}. After short read alignment, the common approach 
for detecting larger indels is to study the uncovered regions in the reference 
genome. 
If there is a deletion in the donor, there should be an almost equal size 
gap in the mapping result. If there is an insertion in the donor, there should be 
a pair of positions in the reference covered by no single read (in the perfect 
world).  
We can model deletions with our index by adding an edge to the automaton 
skipping each plausible deleted area. For insertions, we can apply \emph{de novo} 
sequence assembly with the unmapped reads 
(and pair of fixed start and end 
sequences representing the prefix and suffix of the predicted insert position) 
to generate plausible insertions, and add these as paths to the automaton. 
(or even 
multiple paths read from an overlap graph, without fixing the assembly). 
Realigning all the reads with the index built after adding the plausible deletions 
and insertions to the automaton then gives a voting result to call for the large 
variants.
\item \emph{Reassembly of donor genome}. Continuing from variation calling, 
one can use the realignment of the reads to the refined 
automaton to give a probability for each edge. It is then easy to extract, for 
example, the most probable path through the automaton as a consensus for the 
donor.
\end{itemize}

We made some feasibility experiments on 
SNP calling problem. We created our index 
on a multiple alignment of four instances of the 76 Mbp human chromosome 18. The 
total size of the index was about 67 MB. Aligning a set of 10 million Illumina 
Solexa reads of length 56 took 18 minutes, and about 1.0\% of exact matches were 
novel recombinants not found when indexing each chromosome instance separately 
(see Sect.~\ref{sect:experiments}). Leaving these exact matches out from  
variation calling reduces the number of novel SNPs from 4203 to 1074.
With approximate matching, the proportion of 
novel recombinants increased to 2.4\%.
 
\paragraph{Related work and extensions.}

\emph{Jumping alignments} of protein sequences were studied in \cite{SRS02} as
an alternative to profile Hidden Markov Models.
They showed how to do local alignment across a multiple alignment of protein family so that 
jumping from one sequence to another is associated with a penalty. 
This gives an alternative to 
the Markov model approaches to decide whether a given protein is part of the protein family. 
The computation requires dynamic programming through the multiple alignment.
We study the same problem but from a different angle; we 
provide a compressed representation of the multiple alignment with 
an efficient way to support pattern matching.

Calling of large indels similar to our approach was studied in \cite{Alb10}.
One difference is that they manage to associate probabilities to the putative genotypes, resulting into a more reliable calling of likely variants. 
However, their indexing part is tailored for this specific problem, whereas we develop a more systematic approach that can be generalized to many directions. 
For example, we can take the probabilities into account and index only paths with high enough probabilities, 
to closely simulate their approach (see Sect.~\ref{sect:discussion}).

Our work builds on the self-indexing scenario \cite{Navarro2007}, and more specifically is an extension of 
the \emph{XBW transform} \cite{Ferragina2009b} that is an index structure for labeled trees. Our extension 
to \emph{labeled graphs} may be of independent interest, as it has potentially many more applications inside 
and outside computational biology.

The focus of this paper is the finite automaton 
representation of a multiple alignment. This setting is closely related to our 
previous work on indexing highly repetitive sequence collections 
\cite{Maekinen2010}. In our previous work, we represented a collection of 
individual genomes of total length $N$, with reference sequence of length $n$, and 
a total of $s$ mutations, in space $O(n\log \frac{N}{n} + s \log^2 N)$ bits in the 
average case (rough upper bounds here for simplicity). Exact pattern matching was 
supported in $O(m \log N)$ time. The index proposed in this paper achieves 
$O(n(1+s/n)^{O(\log n)})$ bits in the expected case for constant-sized alphabets.

The paper is organized as follows. 
Section~\ref{sect:defs} introduces the notation, 
Sect.~\ref{sect:indexes} reviews the necessary index structures we build on, 
Sect.~\ref{sect:fl} describes the new extension to finite languages,  
Sect.~\ref{sect:construction} shows how to construct the new index given a multiple alignment,
Sect.~\ref{sect:experiments} gives some preliminary 
experiments on SNP calling problem, and 
Sect.~\ref{sect:discussion} concludes the work by sketching the steps required 
for making the index into a fully applicable tool.

\section{Definitions}
\label{sect:defs}

A {\em string} $S = S[1,n]$ is a {\em sequence} of {\em characters} from {\em alphabet} $\Sigma = \{ 1, 2, \dotsc, \sigma \}$. A {\em substring} of $S$ is written as $S[i,j]$. A substring of type $S[1,j]$ is called a {\em prefix}, while a substring of type $S[i,n]$ is called a {\em suffix}. A {\em text} string $T = T[1,n]$ is a string terminated by $T[n] = \$ \not\in \Sigma$ with lexicographic value $0$. The {\em lexicographic order} ''$<$'' among strings is defined in the usual way.

A {\em graph} $G = (V, E)$ consists of a set $V = \set{v_{1}, \dotsc, 
v_{\abs{V}}}$ of {\em nodes} and a set $E \subset V^{2}$ of {\em edges} such that 
$(v, v) \not\in E$ for all $v \in V$. We call $(u, v) \in E$ an edge from node $u$ 
to node $v$. A graph is {\em directed}, if edge $(u, v)$ is distinct from edge 
$(v, u)$. In this paper, the graphs are always directed. For every node $v \in V$, 
the {\em indegree} $in(v)$ is the number of incoming edges $(u, v)$, and the {\em outdegree} $out(v)$ the number of outgoing edges $(v, w)$.

Graph $G$ is said to be {\em labeled}, if we attach a {\em label} $\ell(v)$ to 
each node $v \in V$. A {\em path} $P = u_{1} \dotsm u_{\abs{P}}$ is a sequence of 
nodes from $u_{1}$ to $u_{\abs{P}}$ such that $(u_{i}, u_{i+1}) \in E$ for all $i 
< \abs{P}$. The label of path $P$ is the string $\ell(P) = \ell(u_{1}) \dotsm 
\ell(u_{\abs{P}})$. A {\em cycle} is a path from a node to itself containing, at 
least one other node. If a graph contains no cycles, it is called {\em acyclic}.

A {\em finite automaton} is a directed labeled graph $A = (V, E)$.\footnote{Unlike the usual definition, we label nodes instead of edges.} The {\em initial node} $v_{1}$ is labeled with $\ell(v_{1}) = \#$ with lexicographic value $\sigma + 1$, while the {\em final node} $v_{\abs{V}}$ is labeled with $\ell(v_{\abs{V}}) = \$$. The rest of the nodes are labeled with characters from alphabet $\Sigma$. Every node is assumed to be on some path from $v_{1}$ to $v_{\abs{V}}$.

The {\em language} $L(A)$ \emph{recognized} by automaton $A$ is the set of the labels of 
all paths from $v_{1}$ to $v_{\abs{V}}$. We say that automaton $A$ recognizes a string $S \in L(A)$, and that a suffix $S'$ can be recognized from node $v$, if there is a path from $v$ to $v_{\abs{V}}$ with label $S'$.
Note that all strings in the language are of form $\# x \$$, where $x$ is a string from alphabet $\Sigma$.
If the language 
contains a finite number of strings, it is called {\em finite}. A language is 
finite if and only if the automaton is acyclic. Two automata are said to be {\em 
equivalent}, if they recognize the same language.

Automaton $A$ is forward (reverse) {\em deterministic} if, for every node $v \in V$ and every character $c \in \Sigma \cup \set{\#, \$}$, there exists at most one node $u$ such that $\ell(u) = c$ and $(v, u) \in E$ ($(u, v) \in E$). For any language recognized by some finite automaton, we can always construct an equivalent automaton that is forward (reverse) deterministic.

\section{Compressed indexes}
\label{sect:indexes}

The {\em suffix array (SA)} of text $T[1,n]$ is an array of pointers $\SA[1,n]$ to the suffixes of $T$ in lexicographic order. As an abstract data type, a suffix array is any data structure that supports the following operations efficiently: (a) {\em find} the SA range containing the suffixes prefixed by  \emph{pattern} $P$; (b) {\em locate} the suffix $\SA[i]$ in the text; and (c) {\em display} any substring of text $T$.
 
{\em Compressed suffix arrays (CSA)} \cite{Grossi2005,Ferragina2005a} 
support these operations. Their compression is based on the {\em Burrows-Wheeler 
transform (BWT)} \cite{Burrows1994}, a permutation of the text related to the SA. 
The BWT of text $T$ is a sequence $\BWT[1,n]$ such that $\BWT[i] = T[\SA[i] - 1]$, 
if $\SA[i] > 1$, and $\BWT[i] = T[n] = \$$ otherwise.
 
BWT can be reversed by a permutation called {\em $LF$\nobreakdash-mapping} \cite{Burrows1994,Ferragina2005a}. Let $C[1,\sigma]$ be an array such that $C[c]$ is the number of characters in $\{ \$, 1, 2, \dotsc, c-1 \}$ occurring in the BWT. We also define $C[0] = C[\$] = 0$ and $C[\sigma + 1] = n$. We define $LF$\nobreakdash-mapping as $LF(i) = C[\BWT[i]] + rank_{\BWT[i]}(\BWT,i)$, where $rank_{c}(\BWT,i)$ is the number of occurrences of character $c$ in prefix $\BWT[1,i]$.
 
The inverse of $LF$\nobreakdash-mapping is $\Psi(i) = select_{c}(\BWT, i - C[c])$, where $c$ is the highest value with $C[c] < i$, and $select_{c}(\BWT,j)$ is the position of the $j$th occurrence of character $c$ in $\BWT$ \cite{Grossi2005}. 
By its definition, function $\Psi$ is strictly increasing in the range $[C[c] + 1, C[c+1]]$ for every $c \in \Sigma$. Note that $T[\SA[i]] = c$ and $\BWT[\Psi(i)] = c$ for $C[c] < i \le C[c+1]$.
 
These functions form the backbone of CSAs. As $\SA[LF(i)] = \SA[i]-1$ \cite{Ferragina2005a} and hence $\SA[\Psi(i)] = \SA[i]+1$, we can use these functions to move the suffix array position backward and forward in the sequence. The functions can be efficiently implemented by adding some extra information to a compressed representation of the BWT. Standard techniques \cite{Navarro2007} for supporting SA functionality include using {\em backward searching} \cite{Ferragina2005a} for {\em find}, and sampling some suffix array values for {\em locate} and {\em display}.

{\em XBW} \cite{Ferragina2009b} is a generalization of the Burrows-Wheeler transform for labeled trees, where leaf nodes and internal nodes are labeled with different alphabets. Internal nodes of the tree are sorted lexicographically according to path labels from the node to the root. Sequence $\BWT$ is formed by concatenating the labels of the children of each internal node in lexicographic order according to the parent node. Every internal node $v$ now corresponds to a substring $\BWT[sp_{v}, ep_{v}]$ containing the labels of its children. The first position $sp_{v}$ of each such substring is marked with a \onebit\ in bit vector $F$. Backward searching is used to support the analogue of {\em find}. Tree navigation is possible by using $\BWT$ and $F$.

\section{Burrows-Wheeler transform for finite languages}
\label{sect:fl}

In this section, we generalize the XBW approach to finite automata. We call it the {\em generalized compressed suffix array (GCSA)}. For the GCSA to function, we require that the automaton is prefix-sorted. Refer to Section~\ref{sect:construction} on how to transform an automaton into an equivalent prefix-sorted automaton.

\begin{definition}
Let $A$ be a finite automaton, and let $v \in V$ be a node. Node $v$ is {\em prefix-sorted} by prefix $p(v)$, if the labels of all paths from $v$ to $v_{\abs{V}}$ share a common prefix $p(v)$, and no path from any other node $u \ne v$ to $v_{\abs{V}}$ has $p(v)$ as a prefix of its label. Automaton $A$ is prefix-sorted, if all nodes are prefix-sorted.
\end{definition}

Every node of a prefix-sorted automaton $A$ corresponds to a lexicographic range of suffixes of language $L(A)$. These ranges do not overlap for any two nodes.

In XBW, bit vector $F$ is used to mark both nodes and edges. If node $v$ has lexicographic rank $i$, the labels of its predecessors are $\BWT[sp_{v}, ep_{v}] = \BWT[select_{1}(F,i), select_{1}(F,i+1)-1]$. On the other hand, if node $u$ is a child of node $v$, and $\BWT[j]$ contains the label of node $u$, then $LF(j)$ is the lexicographic rank of the label of the path from node $u$ through node $v$ to the root. Hence $select_{1}(F, LF(j))$ gives us the position of edge $(u,v)$.

While the latter functionality is trivial in trees, a node can have many outgoing edges in a finite automaton. Hence we will use another bit vector $M$ to mark the outgoing edges.

Let $A = (V, E)$ be a prefix-sorted automaton. To build GCSA, we sort the nodes $v \in V$ according to prefixes $p(v)$. For every node $v \in V$, sequence $\BWT$ and bit vectors $F$ and $M$ contain range $[sp_{v}, ep_{v}]$ of length $n(v) = \max(in(v), out(v))$. See Figure~\ref{fig:sorted} and Table~\ref{table:gcsa} for an example.
\begin{itemize}

\item $\BWT[sp_{v}, ep_{v}]$ contains the labels $\ell(u)$ for all incoming edges $(u, v) \in E$, followed by $n(v) - in(v)$ empty characters.

\item $F[sp_{v}] = 1$ and $F[i] = 0$ for $sp_{v} < i \le ep_{v}$.

\item $M[sp_{v}, ep_{v}]$ contains $out(v)$ \onebit{}s followed by $n(v) - out(v)$ \zerobit{}s. For the final node $v_{\abs{V}}$, the range contains one $1$\nobreakdash-bit followed by $0$\nobreakdash-bits.

\end{itemize}

Array $C$ is used with some modifications. We define $C[\sigma + 1] = C[\#]$ in the same way as for regular characters, while $C[\sigma + 2]$ is set to be $\abs{E}$. Assuming that each edge $(u, v) \in E$ has an implicit label $\ell(u) p(v)$, we can interpret $C[c]$ as the number of edges with labels smaller than $c$. We write $char(i)$ to denote character $c$ such that $C[c] < i \le C[c+1]$.

\subsection{Basic navigation}

Let $[sp_{v}, ep_{v}]$ be the range of $\BWT$ corresponding to node $v \in V$. We define the following functions:
\begin{itemize}
\item $LF([sp_{v}, ep_{v}], c) = [sp_{u}, ep_{u}]$, where $\ell(u) = c$ and $(u, v) \in E$, or $\emptyset$ if no such $u$ exists.
\item $\Psi([sp_{u}, ep_{u}]) = \set{[sp_{v}, ep_{v}] \mid (u, v) \in E}$.
\item $\ell([sp_{v}, ep_{v}]) = \ell(v)$.
\end{itemize}
These are generalizations of the respective functions on BWT. $LF$ can be used to move backwards on edge $(u, v)$ such that $\ell(u) = c$, while $\Psi$ lists the endpoints of all outgoing edges from node $u$. These functions can be implemented by using $\BWT$, $F$, $M$, and $C$, as seen in Figure~\ref{fig:navigation}.

\begin{figure}[t!]\centering
\begin{minipage}[t]{.5\textwidth}
\begin{tabbing}
mm\=m\=m\= \kill
{\bf function} $LF([sp_{v}, ep_{v}], c)$: \\
1 \> $i \leftarrow C[c] + rank_{c}(\BWT, ep_{v})$ \\
2 \> {\bf if} $select_{c}(\BWT, i - C[c]) < sp_{v}$: \\
3 \> \> {\bf return} $\emptyset$ \\
4 \> $i \leftarrow select_{1}(M, i)$ \\
5 \> $sp_{u} \leftarrow select_{1}(F, rank_{1}(F, i))$ \\
6 \> $ep_{u} \leftarrow select_{1}(F, rank_{1}(F, i) + 1) - 1$ \\
7 \> {\bf return} $[sp_{u}, ep_{u}]$ \\
\\
{\bf function} $\ell([sp_{v}, ep_{v}])$: \\
8 \> {\bf return} $char(rank_{1}(M, sp_{v}))$ \\
\\
\end{tabbing}
\end{minipage}
\hspace{5pt}
\begin{minipage}[t]{.5\textwidth}
\begin{tabbing}
mm\=m\=m\= \kill
{\bf function} $\Psi([sp_{u}, ep_{u}])$: \\
9  \> $c \leftarrow \ell([sp_{u}, ep_{u}])$ \\
10 \> $res \leftarrow \emptyset$ \\
11 \> $low \leftarrow rank_{1}(M, sp_{u})$ \\
12 \> $high \leftarrow rank_{1}(M, ep_{u})$ \\
13 \> {\bf for} $i \leftarrow low$ {\bf to} $high$: \\
14 \> \> $j \leftarrow select_{c}(\BWT, i - C[c])$ \\
15 \> \> $sp_{v} \leftarrow select_{1}(F, rank_{1}(F, j))$ \\
16 \> \> $ep_{v} \leftarrow select_{1}(F, rank_{1}(F, j) + 1) - 1$ \\
17 \> \> $res \leftarrow res \cup \set{[sp_{v}, ep_{v}]}$ \\
18 \> {\bf return} $res$
\end{tabbing}
\end{minipage}

\caption{Pseudocode for the basic navigation functions $LF$, $\Psi$, and $\ell$.}\label{fig:navigation}
\end{figure}

Line 1 of $LF$ is similar to the regular $LF$, determining the rank of edge label 
$cp(v)$ among all edge labels. On lines 2 and 3, we determine if there is an 
occurrence of $c$ in $\BWT[sp_{v}, ep_{v}]$. On line 4, we find the position of 
edge $(u, v)$ in bit vector $M$, and on lines 5 and 6, we find the range $[sp_{u}, 
ep_{u}]$ containing this position.

In function $\Psi$, we determine the ranks of the outgoing edges from node $u$ on lines 11 and 12. Line 14 is similar to the regular $\Psi$, determining where the label of node $u$ corresponding to the edge of rank $i$ occurs in $\BWT$. Lines 15 and 16 are similar to lines 5 and 6 in $LF$, converting the position $j$ to the range $[sp_{v}, ep_{v}]$ corresponding to the destination node $v$.

\subsection{Searching}

As the generalized compressed suffix array is a CSA, most of the algorithms using a CSA can be modified to use GCSA instead. In this section, we describe how to support the basic SA operations (see Sect.~\ref{sect:indexes}):

\begin{itemize}
\item {\em find}($P$) returns the range $[sp, ep]$ of $\BWT$ corresponding to those nodes $v$, where at least one path starting from $v$ has pattern $P$ as a prefix of its label.
\item {\em locate}($[sp_{v}, ep_{v}]$) returns a numerical value corresponding to node $v$.
\item {\em display}($[sp_{u}, ep_{u}], k$) returns a prefix of the label of the path starting from node $u$. Stops when the prefix has length $k$ or when there are multiple or no outgoing edges from the current node.
\end{itemize}

We use backward searching \cite{Ferragina2005a} to support {\em find}. The algorithm maintains an invariant that $[sp_{i}, ep_{i}]$ is the range returned by {\em find}($P[i,m]$). In the initial step, we start with the edge range $[C[P[m]] + 1, C[P[m]+1]]$, and convert it to range $[sp_{m}, ep_{m}]$ by using bit vectors $M$ and $F$. The step from $[sp_{i+1}, ep_{i+1}]$ to $[sp_{i}, ep_{i}]$ is a generalization of function $LF$ for a range of nodes. We find the first and last occurrences 
of character $P[i]$ in $\BWT[sp_{i+1}, ep_{i+1}]$, map them to edge ranks by using $C$ and $\BWT$, and convert the ranks to $sp_{i}$ and $ep_{i}$ by using $F$ and $M$.

For {\em locate}, we assume that there is a (not necessarily unique) numerical value $id(v)$ attached to each node $v \in V$. Examples of these values include node ids (so that $id(v_{i}) = i$) and positions in the multiple alignment. To avoid excessive sampling of node values, $id(v)$ should be $id(u) + 1$ whenever $(u,v)$ is the only outgoing edge from $u$ and the only incoming edge to $v$.

We sample $id(u)$, if there are multiple or no outgoing edges from node $u$, or if $id(v) \ne id(u) + 1$ for the only outgoing edge $(u, v)$. We also sample one out of $d$ node values, given sample rate $d > 0$, on paths of at least $d$ nodes without any samples. The sampled values are stored in the same order as the nodes, and their positions are marked in bit vector $B$ ($B[sp_{u}] = 1$, if node $u$ is sampled).

Node values are retrieved in a similar way as in CSAs \cite{Navarro2007}. To retrieve $id(u)$, we first check if $B[sp_{u}] = 1$, and return sample $rank_{1}(B, sp_{u})$, if this is the case. Otherwise we follow the only outgoing edge $(u,v)$ by using function $\Psi$, and continue from node $v$. When we find a sampled node $w$, we return $id(w) - k$, where $k$ is the number of steps taken by using $\Psi$.

{\em Display} is implemented in a straightforward way. We first output $\ell([sp_{u}, ep_{u}])$, and then use $\Psi([sp_{u}, ep_{u}])$ to get the outgoing edges. If there are multiple or no edges, we stop. Otherwise $(u, v)$ is the only edge, and we continue from node $v$ until $k$ characters have been output.

A better analogue to the {\em display} of suffix arrays would be {\em display}($id(u), k$), extracting prefixes of path labels from all nodes $v \in V$ with node value $id(u)$. To implement such operation, we need am efficient way to map node values to $\BWT$ ranges. While some node value schemes allow such mapping easily, there is no obvious way to do so in other schemes.

\subsection{Analysis}

For each node $v \in V$, the length of range $[sp_{v}, ep_{v}]$ is the maximum of $in(v)$ and $out(v)$. As every node must have at least one incoming edge and one outgoing edge (except for the initial and the final nodes), the length of $\BWT$ is at most $2\abs{E} - \abs{V} + 2$.

Similar size bounds as for different variants of the CSA can be defined for compressed representations of $\BWT$, if we first define a generalization of empirical entropy. Bit vectors $F$ and $M$ have $\abs{V}$ and $\abs{E}$ $1$\nobreakdash-bits out of $\abs{\BWT}$, respectively. The number and the size of the samples depend greatly on the node value scheme used.

\begin{theorem}
Assume that $rank$ and $select$ on bit vectors require $O(t_{B})$ time. GCSA with sample rate $d$ supports {\em find}($P$) in $O(\abs{P} \cdot t_{B})$ and {\em locate}($[sp_{v}, ep_{v}]$) in $O(d \cdot t_{B})$ time.
\end{theorem}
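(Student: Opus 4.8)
The plan is to analyze the cost of each of the two operations by counting the number of $rank$ and $select$ queries on bit vectors (and on $\BWT$, which supports $rank_c$ and $select_c$ in $O(t_B)$ time by storing one indicator bit vector per character over a constant-sized alphabet).

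\textbf{Bounding \emph{find}.} First I would recall that backward searching performs $\abs{P}$ iterations plus one initialization step. The initialization turns the edge range $[C[P[\abs{P}]]+1, C[P[\abs{P}]+1]]$ into $[sp_{\abs{P}}, ep_{\abs{P}}]$ using a constant number of $rank_1$/$select_1$ calls on $F$ and $M$. Each subsequent step from $[sp_{i+1}, ep_{i+1}]$ to $[sp_i, ep_i]$ is essentially the range version of the function $LF$ from Figure~\ref{fig:navigation}: it finds the first and last occurrences of $P[i]$ in $\BWT[sp_{i+1}, ep_{i+1}]$ (two $rank_c$ queries on $\BWT$, or one $rank_c$ and one $select_c$), maps to edge ranks via $C$ and $\BWT$, and converts to $sp_i, ep_i$ via a constant number of $rank_1$/$select_1$ on $M$ and $F$. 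So each of the $\abs{P}+1$ steps costs $O(t_B)$, giving $O(\abs{P}\cdot t_B)$ total.

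\textbf{Bounding \emph{locate}.} The key observation is the sampling invariant: on any path of nodes with no sampled value, there are at most $d-1$ consecutive unsampled nodes before a sample is forced, because (i) we sample whenever a node has multiple or no outgoing edges or when $id$ is not incremented by one along its unique outgoing edge, and (ii) we additionally sample one node out of every $d$ on long unsampled runs. Hence, starting from $[sp_v, ep_v]$, if $B[sp_v] = 0$ we apply $\Psi$ to follow the unique outgoing edge and repeat; after at most $d-1$ applications of $\Psi$ we reach a node $w$ with $B[sp_w] = 1$, and return $id(w) - k$. Each $\Psi$ step here is cheap because the node has a unique outgoing edge: $low = high$ in the pseudocode, so the loop body runs once, costing $O(t_B)$ ($\ell$ via one $rank_1$ on $M$, then one $select_c$ on $\BWT$, then a constant number of $rank_1$/$select_1$ on $F$). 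Checking $B$ and retrieving the stored sample is also $O(t_B)$. Summing over at most $d$ steps gives $O(d\cdot t_B)$.

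\textbf{Main obstacle.} The only subtle point is the claim that the $\Psi$ steps taken during \emph{locate} traverse only nodes with a single outgoing edge, so that each $\Psi$ call is $O(t_B)$ rather than $O(out(u)\cdot t_B)$; this follows precisely from the design decision to force a sample at every branching node. I would make this explicit: once we are on an unsampled node and following the retrieval procedure, the node must have had exactly one outgoing edge (otherwise it would have been sampled), so the traversal is a simple path, its length is bounded by $d$ by the extra sampling rule, and the total is $O(d\cdot t_B)$. The \emph{find} bound is routine given the pseudocode; I would simply state that every line of the range-$LF$ step is a constant number of $rank$/$select$ operations.
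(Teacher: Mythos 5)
Your proof is correct and follows essentially the same line as the paper's: encode $\BWT$ with one indicator bit vector per character, observe that each backward-search step is a constant number of $rank$/$select$ calls, and for \emph{locate} note that every unsampled node has a unique outgoing edge (so $\Psi$ costs $O(t_B)$) and a sample is reached within $d-1$ steps. The paper's proof is more terse but relies on exactly the same two observations; your explicit remark about why $\Psi$ is cheap during \emph{locate} is a point the paper also makes, just more briefly.
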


\begin{proof}
We use bit vectors $\Psi_{c}$ that mark the occurrences of character $c \in \Sigma \cup \set{\#}$ to encode $\BWT$. This reduces {\em rank} and {\em select} on $\BWT$ to the same operations on bit vectors. Basic operations $\ell$ and $LF$ take $O(t_{B})$ time, as they require a constant number of bit vector operations. $\Psi$ also takes $O(t_{B})$ time, if the current node has outdegree $1$. As {\em find} does one generalized $LF$ per character of pattern, it takes $O(\abs{P} \cdot t_{B})$ time.

Operation {\em locate} checks from bit vector $B$ if the current position is sampled, and follows the unique outgoing edge using $\Psi$ if not. This requires a constant number of bit vector operations per step. As a sample is found within $d-1$ steps, the time complexity is $O(d \cdot t_{B})$. \qed
\end{proof}

\subsection{Languages recognized by a prefix-sorted automaton}
\label{sect:languages}

GCSA can be extended from finite languages to some infinite languages as well. To define the class of languages that can be indexed with our approach, we relax the requirement that the automaton should be prefix-sorted.

\begin{definition}
Let $A = (V, E)$ be a finite automaton, and let $v \in V$ be a node. Let $rng(v)$ be the smallest (open, semiopen, or closed) lexicographic range containing all suffixes that can be recognized from node $v$. Node $v$ is {\em prefix-range-sorted}, if no suffix $S \in rng(v)$ is recognized from any other node $v' \ne v$. Automaton $A$ is prefix-range-sorted, if all nodes are prefix-range-sorted.
\end{definition}

The definition states that the ranges of suffixes recognized from two nodes must not overlap. When this is true, the incoming edge encoded by character $c$ of rank $i$ in $\BWT$ is the same edge as the outgoing edge encoded by the \onebit{} of rank $C[c] + i$ in bit vector $M$.

\begin{theorem}
The class of languages recognized by prefix-range-sorted automata is strictly between finite languages and regular languages.
\end{theorem}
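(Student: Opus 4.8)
The plan is to prove the two inclusions separately: every finite language is recognized by a prefix-range-sorted automaton, and every language recognized by a prefix-range-sorted automaton is regular; then exhibit witnesses showing both inclusions are strict.

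\textbf{Finite languages are recognized by prefix-range-sorted automata.} First I would note that the easy containment is almost immediate: if $L$ is finite, then by the remarks in Section~\ref{sect:construction} we may assume (or construct, cf.\ the forthcoming construction section) an equivalent prefix-sorted automaton, and a prefix-sorted automaton is trivially prefix-range-sorted, since if each node $v$ has a common prefix $p(v)$ that prefixes no suffix recognized from another node, then the minimal range $rng(v)$ — contained in the cone of strings having prefix $p(v)$ — cannot overlap $rng(v')$ for $v' \neq v$. So finite $\subseteq$ prefix-range-sorted. Strictness of this inclusion is witnessed by any infinite regular language that happens to be prefix-range-sortable; the simplest is $a^{*}$ (i.e.\ the language $\#a^{*}\$$ in the paper's convention): build the automaton with the initial node $\#$, one node labeled $a$ with a self-loop and an edge to the final node $\$$, and edges $\# \to a$, $\# \to \$$. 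The $a$-node recognizes exactly the suffixes $a^{k}\$$ for $k \geq 1$, whose minimal range is disjoint from the ranges of the $\#$-node and the $\$$-node, so the automaton is prefix-range-sorted, yet the language is infinite.

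\textbf{Prefix-range-sorted automata recognize only regular languages.} Since a prefix-range-sorted automaton \emph{is} a finite automaton (just with a structural restriction on the layout of its suffix ranges), the language it recognizes is regular by definition of the language recognized by a finite automaton — no work is needed here beyond observing that the prefix-range-sorted condition is a restriction, not a generalization, of the automaton model. The substantive half is strictness: I must produce a regular language recognized by \emph{no} prefix-range-sorted automaton. The natural candidate is a language where two genuinely distinct states must both recognize continuations that are lexicographically interleaved, so that their minimal ranges are forced to overlap no matter how the automaton is redrawn. A clean choice is $L = \#(a \mid b)(a \mid b)^{*}\$$ restricted suitably, or more sharply something like the language over $\{a,b\}$ consisting of all strings whose first and last symbols agree — call the two "memory" states $q_a$ and $q_b$. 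From $q_a$ one can recognize, e.g., $ba^{k}a\$$ and from $q_b$ one can recognize $ba^{k}b\$$, and as $k$ grows these two families of suffixes become arbitrarily close and interleave in lexicographic order, so $rng(q_a)$ and $rng(q_b)$ must overlap; and any automaton for $L$ must distinguish these two continuations by the Myhill--Nerode argument, so no equivalent automaton can be prefix-range-sorted.

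\textbf{Main obstacle.} The first and third containments are essentially definitional; the real work — and the step I expect to be delicate — is the impossibility argument for the strictness of "$\subseteq$ regular". It is not enough to show that one \emph{particular} automaton for the witness language fails the prefix-range-sorted test; I need to argue that \emph{every} finite automaton recognizing the language fails it. The way I would do this is to pin down, via a Myhill--Nerode / fooling-set argument, a pair of residual languages of the target language that are (i) distinct, hence must be "witnessed" by distinct nodes in any correct automaton, and (ii) each infinite with suffix sets that are lexicographically cofinal in a common interval, hence forcing $rng$-overlap. Making (ii) precise — exhibiting, for any two such nodes, explicit suffixes $S \in rng(v) \cap rng(v')$ — is the part that needs a careful choice of witness language so that the interleaving is unavoidable rather than an artifact of a bad automaton. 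I would therefore spend most of the proof there, and keep the two trivial inclusions to a sentence each.
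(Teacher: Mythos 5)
Your high-level plan (finite $\subseteq$ prefix-range-sorted, prefix-range-sorted $\subseteq$ regular, both strict) matches the paper's, and the two trivial inclusions plus the first witness are fine in spirit. One small formal point: the paper's graph definition explicitly forbids self-loops ($(v,v)\notin E$), so your automaton for $\#a^*\$$ with a self-loop on the $a$-node is not admissible; you would need, say, two $a$-labelled nodes with edges both ways, or simply use the paper's own example $\{\#x\$ : x\in\{a,b\}^*\}$, whose minimal automaton has four nodes with four distinct labels and is immediately prefix-range-sorted.

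The real problem is the interleaving claim in your strictness-of-regular argument, and it is not just a matter of polish. You assert that the families $ba^{k}a\$ = ba^{k+1}\$$ (from $q_a$) and $ba^{k}b\$$ (from $q_b$) ``interleave in lexicographic order.'' They do not: for every $k,k'$ one has $ba^{k+1}\$ < ba^{k'}b\$$ (if $k+1\le k'$ the next character after the common prefix $ba^{k+1}$ is $\$ < a$; if $k+1>k'$ it is $a < b$ after $ba^{k'}$). So the entire first family lies strictly below the entire second, with common limit $ba^\omega$. Since $ba^\omega$ is not itself a recognized suffix, the minimal ranges can be taken right-open and left-open at $ba^\omega$ respectively and are therefore disjoint; ``arbitrarily close'' does not give overlap. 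Hence the conclusion ``$rng(q_a)$ and $rng(q_b)$ must overlap'' does not follow from what you wrote. (Your candidate language can in fact still be made to work, but with different witnesses: e.g.\ $A_n = a^n b a\$$ belongs to the ``first letter $a$'' branch, and $a^{n+1}bb\$$ belongs to the ``first letter $b$'' branch with $A_{n+1} < a^{n+1}bb\$ < A_n$; combined with the observation that in a prefix-range-sorted automaton each suffix is recognized from a unique node and that $\#A_n$ is in the language while $\#a^{n+1}bb\$$ is not, this forces infinitely many nodes. But this is precisely the careful choice you deferred.) There is also a conceptual wobble in appealing to Myhill--Nerode: the automata here are not DFAs and a single residual can be split across several nodes; what the paper exploits instead is that prefix-range-sortedness makes each individual suffix of $L$ recognized from a unique node, and then exhibits an infinite family of suffixes forced to live on pairwise distinct nodes.

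For comparison, the paper sidesteps both of these traps by choosing the three-letter language $L=\{\#x\$ : x\in\{a,b\}^*\cup\{a,c\}^*\}$, where $B_n=a^nb\$$ and $C_n=a^nc\$$ give the immediate chain $B_{n+1} < C_{n+1} < B_n$. It shows $B_n$ and $C_n$ lie on different nodes because $bB_n$ is a suffix of $L$ and $bC_n$ is not, and then that $B_n$ and $B_{n+1}$ lie on different nodes because $C_{n+1}$ sits lexicographically between them. This is shorter and avoids the near-miss you ran into with a two-letter alphabet, where the natural families separate instead of interleave.
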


\begin{proof}
Consider the regular infinite language $\set{ \# x \$ \mid x \in \set{a, b}^{\ast}}$. The minimal automaton recognizing this language is prefix-range-sorted, as each node has a distinct label.

Now consider the regular language $L = \set{ \# x \$ \mid x \in \set{a, b}^{\ast} \cup \set{a, c}^{\ast}}$. Assume that there is a prefix-range-sorted automaton recognizing language $L$. Suffixes $B_{n} = a^{n} b \$$ and $C_{n} = a^{n} c \$$ must be recognized from different nodes, as $bB_{n}$ is a suffix of language $L$, while $bC_{n}$ is not. Suffixes $B_{n}$ and $B_{n+1}$ must also be recognized from different nodes, as $B_{n+1} < C_{n+1} < B_{n}$. As the automaton must have an infinite number of nodes, it cannot be prefix-range-sorted. \qed
\end{proof}

\section{Index construction}
\label{sect:construction}

Our construction algorithm is related to the {\em prefix-doubling} approach to suffix array construction \cite{Puglisi2007}. We start with a reverse deterministic automaton (see Figure~\ref{fig:automaton}), convert it to an equivalent prefix-range-sorted automaton (Figure~\ref{fig:sorted}), and build the GCSA (Table~\ref{table:gcsa}) for that automaton. The algorithm consists mostly of sorting, scanning, and database joins. Hence it can be efficiently implemented in parallel, distributed, and external memory settings.

\begin{theorem}
Assume we have a length $n$ multiple alignment of $r$ sequences from alphabet of size $\sigma$. We can build a prefix-range-sorted automaton recognizing all paths through the alignment in $O(nr \log r + \abs{V'} \log \abs{V'} \log n + \abs{E'})$ time and $O(nr \log \sigma + \abs{V'} \log \abs{V'} + \abs{E'} \log \abs{E'})$ bits of space, where $V'$ and $E'$ are the largest intermediate sets of nodes and edges, respectively.
\end{theorem}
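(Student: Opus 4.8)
The plan is to describe a concrete construction pipeline and track its cost. First I would build the initial reverse deterministic automaton directly from the multiple alignment: create a column of nodes for each alignment position, one node per distinct non-gap character in that column, add the initial node $\#$ and final node $\$$, and insert an edge whenever two consecutive (in the relevant sequence, skipping gaps) positions are occupied. Merging nodes with the same label in the same column and then determinizing with respect to incoming edges costs $O(nr\log r)$ time (sorting the $O(nr)$ sequence-position pairs to group characters per column and to identify the predecessor relationships), and the automaton has $O(nr)$ nodes and edges at this stage; storing it takes $O(nr\log\sigma)$ bits for labels plus $O(nr\log(nr))$ bits for adjacency, which is within the claimed bound since the intermediate sets dominate.

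Next I would run the prefix-doubling transformation to turn this automaton into an equivalent prefix-range-sorted one, mirroring Manber–Myers / prefix-doubling SA construction. Each node $v$ carries a \emph{rank} that refines its position in the lexicographic order of suffixes recognizable from $v$; initially the rank is determined by $\ell(v)$ alone. In round $k$ we want to refine ranks using the first $2^k$ characters of recognizable suffixes, which we do by looking at the current rank of $v$ together with the multiset of ranks of its out-neighbours (a database join on the edge set, sorted lexicographically). When a node's out-neighbours force two recognizable suffixes into the same current range but the automaton would need them separated, we \emph{split} the node, duplicating it and partitioning its incoming edges according to which refined suffix-range the corresponding path belongs to. After $O(\log n)$ rounds all prefixes of length up to $n$ are resolved — and since every recognizable suffix has length at most $n+2$, the automaton is prefix-range-sorted. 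Each round is a constant number of sorts and scans over the current node and edge sets, costing $O(\abs{V'}\log\abs{V'} + \abs{E'}\log\abs{E'})$; but note the $\log n$ rounds are folded into the $\abs{V'}\log\abs{V'}\log n$ term, and the $\abs{E'}\log\abs{E'}$ work over all rounds telescopes into the stated $\abs{E'}$-ish term once we observe the edge set only grows as nodes split, so charging $O(\abs{E'})$ for the final-size join plus the $\abs{V'}\log\abs{V'}\log n$ for the node-side work absorbs it. Space is $O(\abs{V'}\log\abs{V'})$ for the rank arrays and $O(\abs{E'}\log\abs{E'})$ for the joined edge tables.

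Finally, from the prefix-range-sorted automaton, laying out $\BWT$, $F$, $M$ and array $C$ as described in Section~\ref{sect:fl} is a single sorting pass over the $O(\abs{E'})$ edges (sort by the prefix $p(v)$ of the head, then by tail label) followed by a linear scan, costing $O(\abs{E'}\log\abs{E'})$ time and $O(\abs{E'}\log\abs{E'})$ bits — again within budget since $\abs{E'}$ bounds the final edge count.

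The main obstacle, and the step I expect to need the most care, is bounding the splitting in the prefix-doubling phase: one must argue that a node is only ever split when two genuinely distinct recognizable suffix-ranges pass through it, so the total number of splits is controlled by $\abs{V'}$ (the largest intermediate node set, which the theorem statement leaves as a parameter rather than bounding), and that each split correctly re-partitions incoming edges without creating spurious paths or destroying existing ones — i.e. the automaton stays equivalent throughout. Establishing this invariant, and checking that after $O(\log n)$ doubling rounds no two nodes have overlapping $rng(\cdot)$, is where the real work lies; the time and space accounting is then routine bookkeeping over sorts, scans, and joins.
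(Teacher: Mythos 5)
Your overall pipeline (build a reverse deterministic automaton from the alignment; prefix-double it to a prefix-range-sorted automaton; assemble the index) is the right one, and your first step matches the paper's Lemma~\ref{lemma:automaton construction} essentially verbatim. The gap is in how you treat edges during the doubling phase, and it is not a cosmetic one: it breaks the claimed running time.

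You propose to maintain the current automaton's edge set across doubling rounds, redoing a ``database join on the edge set, sorted lexicographically'' in each round, and to cover the cost by asserting that the per-round $O(\abs{E'}\log\abs{E'})$ work ``telescopes into the stated $\abs{E'}$-ish term'' because the edge set ``only grows as nodes split.'' That is exactly backwards: if the edge set only grows, the per-round cost does not decrease, and $\log n$ rounds of $\Theta(\abs{E'}\log\abs{E'})$ work total $\Theta(\abs{E'}\log\abs{E'}\log n)$, while the theorem's edge budget is only $O(\abs{E'})$ time. The paper avoids this entirely by never materialising the intermediate edge sets. Each node $v$ of the doubled automaton $A_i$ carries pointers $\from(v)$ and $to(v)$ into the \emph{original} automaton $A_0$; the join in each doubling step is then performed against the fixed, small edge set $E_0$ of $A_0$ (the condition ``$P(uv)$ exists iff $(to(u),\from(v)) \in E_0$''), so each round touches only the node list and $E_0$, and Lemma~\ref{lemma:node construction} charges only $O(\abs{V'}\log\abs{V'})$ per round. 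The final edge set $E'$ is constructed exactly once at the end (Lemma~\ref{lemma:edge construction}), in $O(\abs{W}+\abs{E'})$ time by bucketing pairs $(\from(u),v)$ on $\ell(\from(u))$ and merge-scanning against the already-sorted node list --- no $\log\abs{E'}$ factor. You need this $\from/to$ mechanism, or something equivalent, to hit the bound; without it the argument does not close.

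Two smaller remarks. First, the mechanism itself is also different: you describe a top-down \emph{split} of a node when its outgoing ranks force a refinement, whereas the paper creates \emph{joined} nodes $uv$ with $rank(uv)=(rank(u),rank(v))$ for every path $P(u)P(v)$ and then \emph{prunes} by merging nodes that share both $rank(\cdot)$ and $\from(\cdot)$. The join-then-prune formulation is what makes the correctness invariant (``after doubling and pruning, $A_{i+1}$ is $2^{i+1}$-sorted and equivalent'') and the $\from(\cdot)$-based merge condition easy to state and check; your split formulation would need you to spell out how incoming edges are partitioned so that equivalence is preserved, which you flag as ``where the real work lies'' but do not actually do. Second, your last step (laying out $\BWT$, $F$, $M$, $C$) is not part of this theorem --- it only claims the prefix-range-sorted automaton --- and if it were included its $O(\abs{E'}\log\abs{E'})$ cost would again exceed the $O(\abs{E'})$ edge-time term.
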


\begin{proof}
From Lemmas \ref{lemma:automaton construction}, \ref{lemma:node construction}, and \ref{lemma:edge construction} below. \qed
\end{proof}

The sizes of the largest intermediate sets of nodes and edges are analyzed in a restricted model in the Appendix.

\begin{figure}[t!]
\centerline{\includegraphics{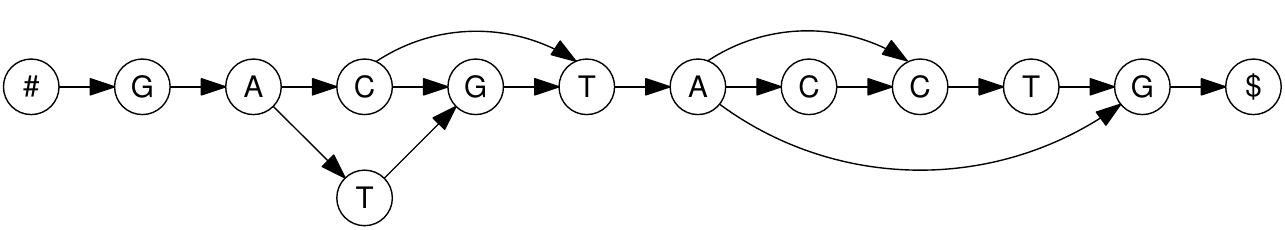}}
\caption{A reverse deterministic automaton corresponding to the first 10 positions of the multiple alignment in Figure~\ref{fig:multiplealignment}.} 
\label{fig:automaton}
\end{figure}

\begin{figure}[t!]
\centerline{\includegraphics{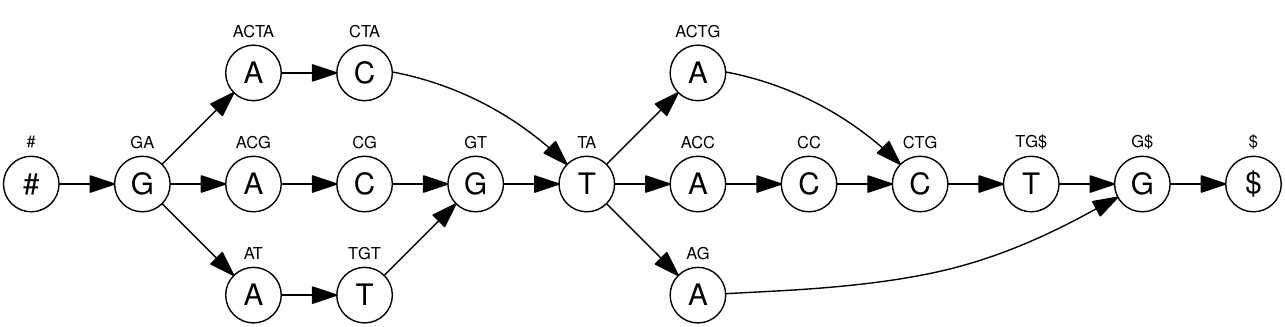}}
\caption{A prefix-sorted automaton built for the automaton in Figure~\ref{fig:automaton}. The strings above nodes are prefixes $p(v)$.} 
\label{fig:sorted}
\end{figure}

\begin{table}[t!]
\centering
\caption{GCSA for the automaton in Figure~\ref{fig:sorted}. Nodes are identified by prefixes $p(v)$.} 
\label{table:gcsa}
\renewcommand{\tabcolsep}{.05cm}
\texttt{
\begin{tabular}{cccccccccccccccccccc}
\hline\noalign{\smallskip}
       & & \$  & ACC & ACG & ACTA & ACTG & AG & AT & CC & CG & CTA & CTG & G\$ & GA   & GT & TA  & TG\$ & TGT & \# \\
\noalign{\smallskip}
\hline
\noalign{\smallskip}
$\BWT$ & & G   & T   & G   & G    & T    & T  & G  & A  & A  & A   & AC  & AT  & \#-- & CT & CG- & C    & A   & \$ \\
$F$    & & 1   & 1   & 1   & 1    & 1    & 1  & 1  & 1  & 1  & 1   & 10  & 10  & 100  & 10 & 100 & 1    & 1   & 1 \\
$M$    & & 1   & 1   & 1   & 1    & 1    & 1  & 1  & 1  & 1  & 1   & 10  & 10  & 111  & 10 & 111 & 1    & 1   & 1 \\
\noalign{\smallskip}
\hline
\end{tabular}}
\end{table}

\subsection{Building the automaton}

In this section, we show how to construct a reverse deterministic automaton from a multiple alignment of sequences. With a similar approach, we can also construct the automaton from a reference sequence and a set of SNPs.

To efficiently build a reverse deterministic automaton, we have to modify the sequences. Let $S_{1}, \dotsc, S_{r}$ be sequences of length $n$ from alphabet $\Sigma \cup \set{-}$, where $-$ represents a gap in the sequence. Consider position $j$. For every sequence $S_{i}$, let $S_{i}[j-k_{i}] = c_{i}$ be the first non-gap character preceding $S_{i}[j]$. We say that sequences $S_{i}$ and $S_{i'}$ are equivalent at position $j$, if $S_{i}[j] = S_{i'}[j] \ne -$ and $c_{i} = c_{i'}$. If sequences $S_{i}$ and $S_{i'}$ are equivalent at position $j$, we move the preceding characters to position $\max(j-k_{i}, j-k_{i'})$.

These modifications can be done in one pass over the sequences in reverse direction. Then, if there are positions with a gap in every sequence, we remove these positions from the sequences. Finally, we form new sequences $T_{i} = \# S_{i} \$$ for all $i$.

\begin{claim}
Let $j$ be a position such that $T_{i}[j] = T_{i'}[j]$ for some $i \ne i'$, and let $T_{i}[j-k]$ and $T_{i'}[j-k']$ be the preceding non-gap characters. If $T_{i}[j-k] = T_{i'}[j-k']$, then $k = k'$.
\end{claim}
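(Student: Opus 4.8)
The plan is to trace back through the sequence-modification procedure described just before the claim and argue that the operation of ``moving preceding characters forward'' enforces a canonical alignment of the non-gap characters that precede any two columns that agree. The key observation is that the modification pass processes positions in reverse and, whenever two sequences become equivalent at a position $j$ (meaning $S_i[j] = S_{i'}[j] \neq -$ and their immediately preceding non-gap characters coincide), it relocates those preceding characters to the \emph{same} position $\max(j - k_i, j - k_{i'})$ in both sequences. So I would first set up notation for the state of the sequences after the full modification pass, and fix a position $j$ and indices $i \neq i'$ with $T_i[j] = T_{i'}[j]$ (a genuine character of $\Sigma$, since the boundary symbols $\#, \$$ occur only at fixed positions and trivially satisfy the claim).

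Next I would do induction on the distance $k$ to the preceding non-gap character (equivalently, on the position $j$ itself, working from left to right after the pass, or I could phrase it as an invariant maintained by the reverse pass). The base case is $k = 1$ (or the case where the preceding non-gap character is $\#$): here the claim is immediate, since a character at position $j-1$ in $T_i$ that equals the preceding non-gap character of $T_{i'}$ forces, by the equivalence condition and the relocation step, that $T_{i'}$ also has its preceding non-gap character exactly at $j-1$, i.e. $k' = 1$. For the inductive step, suppose $T_i[j-k]$ and $T_{i'}[j-k']$ are both equal to some $c \in \Sigma$. If $k \neq k'$, say $k < k'$, then $T_{i'}$ has only gaps in the window $(j-k', j)$ while $T_i$ has a character at $j-k$; I want to derive a contradiction with how the relocation step acted at position $j$. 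Precisely: at the moment the pass reached position $j$, sequences $i$ and $i'$ were equivalent there (they share character $T_i[j]$ at $j$ and — by a sub-argument — share the preceding non-gap character), so the preceding characters of both were moved to a common position $\max(\cdot, \cdot)$, which would force $j - k = j - k'$, contradiction. The sub-argument that the preceding non-gap characters agreed at the time of processing is where I need to be careful, and this is the main obstacle.

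\textbf{The main obstacle.} The delicate point is that ``equivalence at position $j$'' is defined in terms of the preceding non-gap character, and the claim's hypothesis only gives us that these preceding characters are \emph{equal as symbols} ($T_i[j-k] = T_{i'}[j-k']$), not a priori that the relocation step treated them as equivalent — after all, two different occurrences of the same letter could in principle sit at different positions without the algorithm having merged them, if the equivalence test failed one level deeper. So I expect the real work is to show that whenever two columns agree and their preceding non-gap characters agree as symbols, these preceding columns \emph{themselves} agree in the strong sense (same symbol, same preceding non-gap character), which is exactly the inductive hypothesis applied one level earlier. Thus the induction has to be set up to carry a slightly stronger statement, or to peel off one ``preceding non-gap character'' at a time, and I would phrase the induction on the number of non-gap characters strictly between the chosen position and the left end $\#$. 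Once that is in place, the relocation step forces the positions to coincide at each level, and in particular $k = k'$, completing the argument. A final remark I would add: after the pass one also deletes all-gap columns, but this deletion shifts all sequences identically, so it preserves both the equality $T_i[j] = T_{i'}[j]$ and the distances $k, k'$, hence does not affect the statement.
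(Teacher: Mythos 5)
The paper states this claim without proof, so your sketch stands on its own. The core intuition --- that the reverse-pass relocation forces the predecessors of agreeing columns to a common position --- is right, but your framing of the ``main obstacle'' is off, and the induction you invoke is unnecessary and, as sketched, does not close the gap you identify.

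You worry that the ``equivalence test failed one level deeper.'' But the equivalence test at position $j$ examines only the immediate predecessor symbols $c_i, c_{i'}$ as they stand at the moment $j$ is processed; there is no deeper level for it to consult. What you actually need is to show that those intermediate-time symbols coincide with the final ones, $c_i = T_i[j-k]$ and $c_{i'} = T_{i'}[j-k']$. Once that is in hand, the proof is a one-liner with no induction: if $T_i[j-k] = T_{i'}[j-k']$ then $c_i = c_{i'}$, the relocation fires, both predecessors are moved to $\max(j-k_i,\,j-k_{i'})$, and nothing later disturbs them, so $k = k'$.

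The missing step --- and the thing you should state and prove --- is a freezing property of the reverse pass: processing any position $j'$ modifies only positions strictly less than $j'$, and hence once position $j$ has been processed, both $S_i[j]$ and the predecessor of $S_i[j]$ (its position and its symbol, since relocation never changes symbols) are never touched again. To check this, note that for $j'$ strictly between the predecessor position and $j$ one has $S_i[j'] = -$, so sequence $i$ is not involved at $j'$; at $j'$ equal to the predecessor position, only the predecessor \emph{of} that position moves, not the position itself; and for smaller $j'$, all action is confined strictly below it. This single observation replaces the ``strong sense'' strengthening you propose, which is both unneeded and risky: trying to establish agreement one predecessor level at a time has a circular flavor, since the statement at the deeper level is essentially the claim you are proving. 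Your observation that removing all-gap columns and adding $\#,\$$ only shift indices uniformly is correct and should appear in a finished proof, though you should also note that the $\#$ boundary is harmless (any sequence whose first non-gap character sits at final position $j$ has predecessor $\#$ at position $1$, so the distances automatically agree).
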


Let $m \ge 0$ be the desired context length.\footnote{Two mutations within $m$ 
positions in the same sequence are considered to be parts of the same mutation. 
Increasing context length generally decreases index size, construction time, and 
construction space.} For every sequence $i$ and position $j$, where $T_{i}[j] \ne 
-$, we define a label $\ell(i, j)$ of length $m+1$ consisting of the character 
$T_{i}[j]$ and the next $m$ non-gap characters (context). If there are not enough 
characters left, we add duplicates of $\$$ to the end of the label.

For every non-gap character $T_{i}[j]$, we create a temporary node $v_{i,j}$ with label $\ell(v_{i,j}) = T_{i}[j]$ and an edge $(v_{i,j-k}, v_{i,j})$, where $T_{i}[j-k]$ is the non-gap character preceding $T_{i}[j]$. Then, for all positions $j > 1$ and all sequences $T_{i} \ne T_{i'}$, we merge nodes $v_{i,j}$ and $v_{i',j}$, if $\ell(i,j) = 
\ell(i',j)$, to get the actual nodes. For position $j = 1$, we merge nodes $v_{i, 
1}$ for all sequences $i$.

\begin{claim}
Let $(v_{i,j-k}, v_{i,j})$ and $(v_{i',j-k'}, v_{i,j})$ be two edges with 
$\ell(v_{i,j-k}) = \ell(v_{i',j-k'})$. Then $k = k'$, $\ell(i,j-k) = 
\ell(i',j-k') = \ell(v_{i,j-k})\ell(i,j)[1,m]$, and hence nodes $v_{i,j-k}$ and 
$v_{i',j-k'}$ will be merged, making the automaton reverse deterministic.
\end{claim}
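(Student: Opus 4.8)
The plan is to prove the second claim by carefully tracking the label-prefix structure forced by the first claim. Recall the setup: for a non-gap position $j$ in sequence $T_i$, the label $\ell(i,j)$ of length $m+1$ is the character $T_i[j]$ followed by its next $m$ non-gap characters (padded with $\$$ if necessary). We are given two edges $(v_{i,j-k}, v_{i,j})$ and $(v_{i',j-k'}, v_{i,j})$ into the same node $v_{i,j}$, with $\ell(v_{i,j-k}) = \ell(v_{i',j-k'})$; here $T_i[j-k]$ is the non-gap character immediately preceding $T_i[j]$, and likewise $T_{i'}[j-k']$ immediately precedes $T_{i'}[j]$. Note that since both edges enter $v_{i,j}$, we must have $\ell(i,j) = \ell(i',j)$ (this is exactly the merge condition that created the shared node), so in particular $T_i[j] = T_{i'}[j]$.

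First I would deduce $k = k'$. Since $T_i[j] = T_{i'}[j]$ and $T_i[j-k]$, $T_{i'}[j-k']$ are the respective preceding non-gap characters, the hypothesis $\ell(v_{i,j-k}) = \ell(v_{i',j-k'})$ gives $T_i[j-k] = T_{i'}[j-k']$. Now apply the first claim (Claim~1) at position $j$: since $T_i[j] = T_{i'}[j]$ and the preceding non-gap characters agree, we get $k = k'$. This is the step I expect to be the crux — everything downstream is bookkeeping, but it is exactly here that the sequence-modification preprocessing (moving preceding characters to $\max(j-k_i, j-k_{i'})$) earns its keep, and one must be careful that the ``$T_i$'' sequences are the modified ones so that Claim~1 applies verbatim.

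Next I would identify the labels $\ell(i,j-k)$ and $\ell(i',j-k')$. By definition, $\ell(i,j-k)$ consists of $T_i[j-k]$ followed by its next $m$ non-gap characters. But the non-gap character following $T_i[j-k]$ is $T_i[j]$ (that is what $k$ being the gap-distance means), and the subsequent $m-1$ non-gap characters after $T_i[j]$ are exactly the first $m-1$ context characters recorded in $\ell(i,j)$. Hence $\ell(i,j-k) = \ell(v_{i,j-k}) \cdot \ell(i,j)[1,m]$, where $\ell(i,j)[1,m]$ denotes the length-$m$ prefix of $\ell(i,j)$ (and padding by $\$$ is consistent on both sides since $\$$-padding propagates). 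The identical computation for $i'$, using $k' = k$, $\ell(v_{i',j-k'}) = \ell(v_{i,j-k})$, and $\ell(i',j) = \ell(i,j)$, yields $\ell(i',j-k') = \ell(v_{i',j-k'}) \cdot \ell(i',j)[1,m] = \ell(v_{i,j-k}) \cdot \ell(i,j)[1,m] = \ell(i,j-k)$.

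Finally, since $\ell(i,j-k) = \ell(i',j-k')$ and these are the labels attached to the temporary nodes $v_{i,j-k}$ and $v_{i',j-k'}$ at the common position $j-k = j-k'$, the merging rule (merge $v_{i,p}$ and $v_{i',p}$ whenever $\ell(i,p) = \ell(i',p)$) identifies them into a single actual node. Thus every pair of edges feeding a common node from equally-labeled sources has its sources merged, so no node ends up with two distinct in-neighbours carrying the same label; that is precisely the statement that the resulting automaton is reverse deterministic. One small point to address in writing: the edge-case $j-k = 1$, where all $v_{i,1}$ are merged unconditionally, is consistent with the above and needs only a one-line remark.
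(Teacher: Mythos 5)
Your proof is correct, and since the paper states this claim without proof, you are supplying exactly the unstated reasoning the authors intend: reduce $k = k'$ to the preceding claim via $T_i[j]=T_{i'}[j]$ (forced by the merge of $v_{i,j}$ and $v_{i',j}$) and $T_i[j-k]=T_{i'}[j-k']$ (the hypothesis), then unfold the definition of $\ell(i,\cdot)$ to get $\ell(i,j-k)=\ell(v_{i,j-k})\cdot\ell(i,j)[1,m]$ and likewise for $i'$, so the two labels coincide and the merge rule fires, giving reverse determinism. You correctly flag the two places where care is needed — that the modified (shifted) sequences are what make Claim~1 applicable, and that the $j-k=1$ corner case is covered by the unconditional merge at position $1$ — and your observation that the $\$$-padding is self-consistent under the length-one shift is accurate. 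No gaps.
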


\begin{lemma}\label{lemma:automaton construction}
Let $n$ be the length of the multiple alignment, $r$ the number of sequences, $\sigma$ the size of the alphabet, and $m$ the context length. Building a reverse deterministic automaton $A = (V, E)$ takes $O(nr \log r)$ time and requires $O(nr \log \sigma + \abs{E} \log \abs{E})$ bits of space.
\end{lemma}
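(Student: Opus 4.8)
The plan is to analyze each phase of the construction described in the section and bound its time and space separately, then combine. The construction has three stages: (1) modifying the sequences so that equivalent rows share aligned preceding characters, (2) removing all-gap columns and forming the strings $T_i = \# S_i \$$, and (3) creating temporary nodes $v_{i,j}$ with their edges and merging nodes that share the same length-$(m{+}1)$ label $\ell(i,j)$ (with the special case $j=1$). The claims already proved guarantee that the merge in stage (3) yields a reverse deterministic automaton, so correctness is not at issue here; only the resource bounds must be established.

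First I would handle stage (1). The modification ``move the preceding characters to position $\max(j-k_i, j-k_{i'})$'' is stated to be doable in one reverse pass over the sequences. The subtle point is that a single position $j$ may force a cascade: several rows may be pairwise equivalent and the shared block of preceding characters must be aligned across all of them. I would argue that within each column one can group the $r$ rows by the pair $(S_i[j], c_i)$ -- sorting or hashing the $r$ rows costs $O(r\log r)$ per column -- and then for each group perform the shift, which touches $O(r)$ characters worth of pointers per column. Over $n$ columns this is $O(nr\log r)$ time. Storing the $r$ sequences of length $n$ over an alphabet of size $\sigma+1$ costs $O(nr\log\sigma)$ bits, and the auxiliary per-column sorting structures fit in $O(r\log(nr))$ bits, which is subsumed. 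Stage (2), removing all-gap columns and prepending $\#$/appending $\$$, is a single linear scan: $O(nr)$ time and no extra space beyond what we already have.

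Next I would treat stage (3). Creating the temporary nodes $v_{i,j}$ and their edges is $O(nr)$ work, giving at most $nr$ nodes and $nr$ edges initially; after removing all-gap columns this is still $O(nr)$, so $\abs{E} = O(nr)$ and the $O(\abs{E}\log\abs{E})$ space term dominates the raw edge list. Computing the label $\ell(i,j)$ of length $m{+}1$ for every non-gap position can be done by another reverse scan that maintains, for each row, a sliding window of the next $m$ non-gap characters; this is $O(nrm)$ character writes, but since the $m$ characters are packed into a single label object of $O(m\log\sigma)$ bits and $m$ is a constant context-length parameter, I would fold this into $O(nr)$ time and $O(nr\log\sigma + \abs{E}\log\abs{E})$ bits, or alternatively absorb a factor $m$ into the constant. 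The merge itself -- identifying, for each $j>1$, all temporary nodes $v_{i,j}$ with equal $\ell(i,j)$, and for $j=1$ merging all $v_{i,1}$ -- is a database join / radix grouping: sort the (at most $nr$) temporary nodes by the key $(j,\ell(i,j))$ and collapse equal-key runs, updating the endpoints of the $O(nr)$ edges to point to the merged representatives. Sorting $nr$ keys, each of $O(\log n + m\log\sigma) = O(\log n)$ bits, is $O(nr\log(nr)) = O(nr\log r + nr\log n)$; here I would note that $\log n$ is dominated when we recall that the theorem's stated bound is $O(nr\log r)$ -- so I must be careful, and I'd invoke radix sort on the pair whose first component is bounded by $n$ and second by a constant-length label, giving $O(nr)$ for the label part and $O(nr)$ passes indexed by column, i.e. genuinely $O(nr)$, with the $\log r$ factor coming only from stage (1)'s per-column comparison sort (or, if one prefers, from deduplicating rows). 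Updating edge endpoints after merging is a second join of the $O(\abs{E})$ edge list against the node-representative table, again $O(\abs{E})$ via sorting/scanning within the claimed budget.

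The main obstacle I anticipate is the stage-(1) bookkeeping: making precise that the ``one pass in reverse'' simultaneously resolves all pairwise equivalences in a column consistently (the $\max$ over a pair generalizes to a $\max$/representative over an equivalence class of rows), and that the resulting shifts do not interact badly between adjacent columns so that the whole thing really is one pass and not an iterative fixpoint. I would address this by observing that the equivalence at column $j$ depends only on $S_i[j]$ and the identity of the nearest preceding non-gap character $c_i$, both of which are already determined once we have scanned columns $>j$ in the reverse pass, so the shifts can be committed column by column without revisiting; the $O(r\log r)$ per column is then the only nontrivial cost and it accumulates to the claimed $O(nr\log r)$. The space bound follows because at every moment we hold the $r\times n$ character matrix ($O(nr\log\sigma)$ bits), the edge list ($O(\abs{E}\log\abs{E})$ bits), and lower-order auxiliary arrays, matching $O(nr\log\sigma + \abs{E}\log\abs{E})$.
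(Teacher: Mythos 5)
Your decomposition into three stages and the resulting accounting match the paper's proof, which also charges $O(r\log r)$ per column to the alignment modification (sorting the pairs $(c_i,k_i)$ and checking for duplicate $k$ values per character) and $O(r)$ per column to node and edge creation, with the space split between the $O(nr\log\sigma)$-bit alignment and the $O(\abs{E}\log\abs{E})$-bit automaton. The one place you take a genuinely different route is the sort of the labels $\ell(i,j)$: you consider a single global sort of all $nr$ temporary nodes by the key $(j,\ell(i,j))$, notice this would naively cost $O(nr\log n)$ and so overshoot the budget, and repair it with a radix sort. That repair works, but it is more machinery than needed and your description of it (``$O(nr)$ passes indexed by column'') is a bit garbled. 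The paper avoids the issue entirely by never doing a global sort: it processes the alignment column by column in a reverse scan, so the nodes at position $j$ are already grouped, and it \emph{maintains} the sorted order of the length-$(m{+}1)$ labels from position $j+1$ to position $j$ in $O(r)$ time per column (the new label is the current character prepended to a suffix whose relative order is already known, so a single counting/bucket pass suffices). This keeps node creation at $O(nr)$ total and leaves the $\log r$ factor coming solely from the alignment-modification sort, which is exactly the conclusion you arrive at by a longer path. Your extra care about cascading shifts in stage (1) and about equivalence classes rather than mere pairs is a reasonable elaboration of what the paper states tersely, and does not conflict with it.
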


\begin{proof}
The alignment modifications can be done in $O(r \log r)$ time per position by sorting the pairs $(c_{i}, k_{i})$ and checking, if there are multiple values of $k$ per character. To create the nodes, we have to sort the labels $\ell(i,j)$ for each position $j$. If we scan the alignment in reverse direction, we can maintain this order in $O(r)$ time per position. Creating the edges also takes $O(r)$ time per position, as at most $r$ edges are created. Space requirements come from storing the alignment and the automaton. \qed
\end{proof}

\subsection{Creating the nodes of a prefix-sorted automaton}
\label{sect:prefix-sorting}

\begin{definition}
Let $A$ be a finite automaton recognizing a finite language, and let $k > 0$ be an integer. Automaton $A$ is {\em $k$\nobreakdash-sorted} if, for every node $v \in V$, the labels of all paths from $v$ to $v_{\abs{V}}$ share a common prefix $p(v, k)$ of length $k$, or if node $v$ is prefix-sorted by prefix $p(v, k)$ of length at most $k$.
\end{definition}

Note that every automaton is $1$\nobreakdash-sorted. Automaton $A$ is prefix-sorted if and only if it is $n$\nobreakdash-sorted, where $n$ is the length of the longest string in $L(A)$.

Starting from a reverse deterministic automaton $A = A_{0}$, we create the nodes of automata $A_{i} = (V_{i}, E_{i})$ for $i = 1, 2, \dotsc$ that are $2^{i}$\nobreakdash-sorted, until we get an automaton that is prefix-sorted. For every node $v \in V_{i}$, let $P(v)$ be the path of $A$ corresponding to prefix $p(v, 2^{i})$. We store the first and the last nodes of this path as $\from(v)$ and $to(v)$, and set $rank(v)$ to be the lexicographic rank of prefix $p(v, 2^{i})$ among all distinct prefixes $p(u, 2^{i})$ of nodes $u \in V_{i}$. Value $sorted(v)$ is used to indicate whether the node is prefix-sorted.

\begin{claim}
Node $v$ is prefix-sorted if and only if $rank(v)$ is unique.
\end{claim}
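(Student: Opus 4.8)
The plan is to prove both directions of the equivalence using the defining properties of $k$-sorting, with $k = 2^{i}$. Recall that in automaton $A_{i}$, a node $v$ is prefix-sorted (by a prefix $p(v,2^{i})$ of length at most $2^{i}$) precisely when $sorted(v)$ is set, and otherwise every path from $v$ to $v_{\abs{V}}$ shares the common prefix $p(v,2^{i})$ of length exactly $2^{i}$, with $rank(v)$ being the lexicographic rank of this prefix among all distinct prefixes $p(u,2^{i})$.

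First I would handle the easy direction: if $v$ is prefix-sorted, then $rank(v)$ is unique. By the definition of prefix-sorted node (from Section~\ref{sect:fl}), the prefix $p(v)$ shared by all paths leaving $v$ is not a prefix of the label of any path leaving any other node $u \ne v$. In particular, for any other node $u$, the length-$2^{i}$ prefix $p(u,2^{i})$ cannot equal $p(v,2^{i})$: if $v$ is already prefix-sorted by a short prefix $p(v) = p(v,2^{i})$, then $p(v)$ being a prefix of $p(u,2^{i})$ would contradict the uniqueness clause (every path from $u$ has $p(u,2^{i})$, hence $p(v)$, as a prefix). So $rank(v)$, which counts distinct prefixes strictly smaller than $p(v,2^{i})$, cannot coincide with $rank(u)$ for $u \ne v$.

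For the converse — $rank(v)$ unique implies $v$ is prefix-sorted — I would argue contrapositively, or directly: suppose $rank(v)$ is unique, so no node $u \ne v$ has $p(u,2^{i}) = p(v,2^{i})$. I need to exhibit a prefix $p(v)$ witnessing that $v$ is prefix-sorted. Take $p(v) = p(v,2^{i})$ (truncated to the actual shared prefix if $v$ was already marked short). The first requirement, that all paths from $v$ to $v_{\abs{V}}$ share prefix $p(v)$, holds by the definition of $2^{i}$-sorted. For the second requirement, suppose for contradiction that some path from a node $u \ne v$ to $v_{\abs{V}}$ has $p(v)$ as a prefix. Since $A_{i}$ is $2^{i}$-sorted, $u$ either is prefix-sorted by some prefix of length $\le 2^{i}$ or shares the length-$2^{i}$ prefix $p(u,2^{i})$ among all its paths; in either case the set of length-($\le 2^{i}$) labels reachable from $u$ all begin with $p(u,2^{i})$, and since one such label begins with $p(v) = p(v,2^{i})$ of length $\le 2^{i}$, we must have $p(u,2^{i})$ and $p(v,2^{i})$ comparable as prefixes. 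If $|p(v,2^{i})| = 2^{i}$ this forces $p(u,2^{i}) = p(v,2^{i})$, contradicting uniqueness of $rank(v)$; if $|p(v,2^{i})| < 2^{i}$ (the short case) then $p(v)$ is a strict prefix of $p(u,2^{i})$, which contradicts that $v$ was genuinely prefix-sorted by the short $p(v)$ back when $sorted(v)$ was set — so this case cannot arise for a node still present with that short prefix. Either way we reach a contradiction, so $v$ is prefix-sorted.

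The main obstacle I anticipate is the bookkeeping around the two cases of $k$-sortedness — nodes already prefix-sorted by a short prefix versus nodes carrying a full-length-$2^{i}$ prefix — and in particular being careful that $rank(v)$ is defined over \emph{distinct} prefixes, so that "unique rank" really does mean "no other node shares this exact prefix string" rather than something weaker. Once that correspondence between $rank$ values and prefix strings is pinned down, both implications are short. I would also double-check the degenerate boundary: the final node $v_{\abs{V}}$, whose only path label is $\$$, is trivially prefix-sorted and has its own unique rank, so the claim holds there too.
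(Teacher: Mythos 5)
The paper states this claim without proof, so there is no ``paper's argument'' to match; your task was effectively to supply the missing proof, and your overall plan --- unpack the definition of prefix-sortedness, use the fact that $rank(v)$ is determined by the string $p(v,2^i)$, and split on whether $v$ is already marked sorted --- is the right one.

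There is, however, a gap in the converse direction. You derive that $p(u,2^i)$ and $p(v,2^i)$ must be comparable as prefixes, and then in the branch $|p(v,2^i)| = 2^i$ you conclude ``this forces $p(u,2^i) = p(v,2^i)$.'' That inference is only valid when $|p(u,2^i)| = 2^i$ as well, i.e.\ when $u$ is not already marked sorted. If $u$ \emph{is} already prefix-sorted with a strictly shorter prefix, then $p(u,2^i)$ is a \emph{proper} prefix of $p(v,2^i)$, the two strings are unequal, $rank(u) \ne rank(v)$, and the ``contradicting uniqueness of $rank(v)$'' move does not apply. The contradiction in that subcase has to come from somewhere else: since every path from $v$ shares $p(v,2^i)$ and hence also its proper prefix $p(u,2^i)$ as a prefix, the prefix-sortedness of $u$ by $p(u,2^i)$ is violated. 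You do handle the mirror situation (short $p(v,2^i)$, long $p(u,2^i)$), so the fix is just to apply the same reasoning symmetrically to $u$; but as written the case split is incomplete and the step does fail.
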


The basic step of the algorithm is the {\em doubling} step from $A_{i}$ to $A_{i+1}$. If node $u \in V_{i}$ is prefix-sorted, we {\em duplicate} it as $w \in V_{i+1}$, and set $rank(w) = (rank(u), 0)$. Otherwise we create a {\em joined} node $uv \in V_{i+1}$ for every node $v \in V_{i}$ such that $P(uv) = P(u)P(v)$ is a path in $A$, and set $\ell(uv) = \ell(u)$ and $rank(uv) = (rank(u), rank(v))$. As path $P(uv)$ exists if and only if there is an edge $(to(u), \from(v)) \in E_{0}$, this essentially requires two database joins. When the nodes of $A_{i+1}$ have been created, we sort them by their ranks, and replace the pairs of integers with integer ranks.

The doubling step is followed by the {\em pruning} step, where we merge redundant nodes. The nodes in $V_{i+1}$ are sorted by their $rank(\cdot)$ values. If all nodes sharing a certain $rank(\cdot)$ value also share their $\from(\cdot)$ node, these nodes are equivalent, and can be merged.

\begin{claim}
After doubling and pruning steps, automaton $A_{i+1}$ is $2^{i+1}$\nobreakdash-sorted, and recognizes language $L(A_{i})$.
\end{claim}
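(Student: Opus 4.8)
The plan is to prove the two assertions of the claim separately: first that $A_{i+1}$ recognizes the same language as $A_{i}$ (and hence, by induction, the same language as $A_{0}$), and then that it is $2^{i+1}$-sorted.

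For the language-preservation part, I would track what happens to paths under the doubling and pruning operations. The key observation is that the doubling step only replaces each node by either a duplicate (carrying the same label and the same \from\ node, hence the same incoming- and outgoing-edge behaviour, since edges in $A_{i+1}$ are inherited from $A_{0}$ via the \from/$to$ bookkeeping) or by a collection of joined nodes $uv$ that together ``split'' $u$ according to which node $v$ one would reach after reading $p(u,2^{i})$ and then continuing for another $2^{i}$ characters along a path of $A_{0}$. Because a joined node $uv$ exists precisely when $(to(u),\from(v))\in E_{0}$, every path of $A_{0}$ through $P(u)$ continues through some such $P(v)$, so no accepted string is lost; and since $\ell(uv)=\ell(u)$ and the path $P(uv)=P(u)P(v)$ is a genuine path of $A_{0}$, no new string is gained. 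I would make this precise by exhibiting a label-preserving correspondence between paths from $v_{1}$ to $v_{\abs{V}}$ in $A_{0}$ and such paths in $A_{i+1}$, i.e. showing $L(A_{i+1})=L(A_{0})=L(A_{i})$ (the last equality being the induction hypothesis). The pruning step merges only nodes that share both $rank(\cdot)$ and \from(\cdot); two such nodes recognize exactly the same set of suffixes (their futures in $A_{0}$ are identical because they correspond to the same \from\ node), so merging them is an equivalence-preserving operation in the sense of the earlier definition, and the language is untouched.

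For the $2^{i+1}$-sortedness, I would argue node by node. Take $w \in V_{i+1}$. If $w$ is a duplicate of a prefix-sorted node $u \in V_{i}$, then $u$ was prefix-sorted by a prefix of length at most $2^{i}\le 2^{i+1}$, and by the earlier claim ``$v$ is prefix-sorted iff $rank(v)$ is unique'' together with the fact that duplication gives it rank $(rank(u),0)$, the new node $w$ inherits a unique rank and stays prefix-sorted by the same short prefix; this satisfies the definition. If $w=uv$ is a joined node, then the labels of all paths from $w$ to the final node share the prefix $p(u,2^{i})p(v,2^{i})$: the first $2^{i}$ characters are forced because $u$ is $2^{i}$-sorted and not prefix-sorted (so $p(u,2^{i})$ has full length $2^{i}$), and after consuming them one is, by construction, exactly at node $\from(v)$ in $A_{0}$, from which the next $2^{i}$ characters are forced to be $p(v,2^{i})$ (or, if $v$ was prefix-sorted by a shorter prefix, the relevant suffix is already pinned down and $w$ is prefix-sorted by a prefix of length at most $2^{i+1}$). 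Either way $w$ shares a common prefix of length $2^{i+1}$, or is prefix-sorted by a prefix of length at most $2^{i+1}$, which is exactly $2^{i+1}$-sortedness. I would also need to check that the pruning step cannot destroy this property: merging nodes with equal $rank(\cdot)$ and equal \from(\cdot) only identifies nodes whose common prefixes of length $2^{i+1}$ coincide, so the merged node still has a well-defined common prefix of the required length, and the ranks remain consistent after re-indexing.

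The main obstacle I anticipate is the bookkeeping around the \from, $to$, and $rank$ values — in particular verifying that after a joined node $uv$ is created, ``reading $p(u,2^{i})$ from $uv$ lands you at $\from(v)$ in $A_{0}$'' really holds, which requires an auxiliary invariant (maintained across all doubling steps) relating $P(w)$, $\from(w)$, $to(w)$ and the length-$2^{i}$ prefix $p(w,2^{i})$. Once that invariant is stated and shown to be preserved by both the doubling and the pruning steps, both halves of the claim follow fairly mechanically; the subtle case is when $v$ (or $u$) is already prefix-sorted by a prefix strictly shorter than $2^{i}$, where one must be careful that the concatenation argument degenerates gracefully into ``$w$ is prefix-sorted'' rather than ``$w$ shares a length-$2^{i+1}$ prefix''. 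I would handle that by splitting into the cases $sorted(u)$, $sorted(v)$, and neither, and checking the definition in each.
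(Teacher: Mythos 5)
The paper never proves this claim --- it is one of several bare ``Claim'' statements in Section~5, and the proof of Lemma~\ref{lemma:node construction} addresses only running time and space, not correctness. So your proposal cannot be compared to a reference proof; it can only be assessed on its own terms.

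Your overall decomposition (language preservation, then $2^{i+1}$-sortedness, with a case split on duplicated versus joined nodes, and a separate check that pruning is harmless) is the right skeleton, and your instinct that the real work lies in an auxiliary invariant tying $P(w)$, $\from(w)$, $to(w)$ and $p(w,2^{i})$ together is correct. However, the central step for joined nodes as you wrote it does not go through. You assert that ``after consuming $p(u,2^{i})$ one is, by construction, exactly at node $\from(v)$ in $A_{0}$, from which the next $2^{i}$ characters are forced to be $p(v,2^{i})$.'' Two things are off. First, a minor one: after reading $\ell(P(u))$ along $P(uv)$ you are at $to(u)$, and $\from(v)$ is the \emph{next} node. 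Second, and substantively, the characters after $\from(v)$ are \emph{not} forced in $A_{0}$: node $\from(v)$ can have several outgoing paths with different length-$2^{i}$ labels, because there may be another $A_{i}$-node $v'\ne v$ with $\from(v')=\from(v)$ and $p(v',2^{i})\ne p(v,2^{i})$. What is true is that all paths \emph{from $v$ in $A_{i}$} share the prefix $p(v,2^{i})$; but to transfer that to paths from $uv$ in $A_{i+1}$ you need to know that every $A_{i+1}$-path leaving $uv$, after its first $2^{i}$ steps, reaches a node whose $\from$ is $\from(v)$ and which ``tracks'' $P(v)$ --- and that is precisely a constraint on the (implicit) edge set $E_{i+1}$, which neither the paper nor your proposal pins down.

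This is the gap: the claim is a statement about the automaton $A_{i+1}$, and you cannot argue about its paths without first fixing what its edges are. During the doubling phase the paper only manipulates node sets and the bookkeeping triples $(\from,to,rank)$; the edge relation has to be taken as something like ``$(w,w')\in E_{i+1}$ iff $\from(w')$ is the second node of $P(w)$ and $P(w')$ agrees with $P(w)$ on their overlap.'' With that definition in hand, both halves of the claim do follow by a clean induction: the map $\from$ becomes a label-preserving graph homomorphism $A_{i+1}\to A_{0}$ inducing a bijection on accepting paths (language preservation), and every $A_{i+1}$-path from a joined node $uv$ is forced to trace out $P(u)P(v)$ for its first $2^{i+1}$ steps, giving the required common prefix. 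Without stating and using that edge invariant, the phrase ``by construction'' in your argument is doing unearned work, and the $A_{0}$-level claim about $\from(v)$ that you rest on is simply false. Your treatment of duplicated nodes and of the pruning step, by contrast, is fine.
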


\begin{lemma}\label{lemma:node construction}
Creating the nodes of prefix-sorted automaton $A_{i}$ takes $O(\abs{V'} \log \abs{V'} \log n)$ time and requires $O(\abs{V'} \log \abs{V'})$ bits of space in addition to automaton $A$, where $V'$ is the largest set of nodes during construction, and $n$ is the length of the longest string in $L(A)$.
\end{lemma}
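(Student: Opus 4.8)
The plan is to analyze the doubling loop of Section~\ref{sect:prefix-sorting} directly. First I would bound the number of iterations: automaton $A_{i}$ is $2^{i}$\nobreakdash-sorted, and since an automaton is prefix-sorted exactly when it is $n$\nobreakdash-sorted, the loop performs $\lceil \log n \rceil = O(\log n)$ doubling-plus-pruning steps. Correctness of each step --- that $A_{i+1}$ is $2^{i+1}$\nobreakdash-sorted and still recognizes $L(A_{i})$, that $P(uv) = P(u)P(v)$ exists iff $(to(u),\from(v)) \in E_{0}$, and that a node is prefix-sorted iff its rank is unique --- is exactly what the claims in this subsection assert, so I may take it for granted. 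Hence the lemma reduces to two quantitative facts about a single step: it runs in $O(\abs{V'}\log\abs{V'})$ time, and all data kept besides $A$ itself fits in $O(\abs{V'}\log\abs{V'})$ bits.

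Next I would decompose one doubling step into a constant number of sorting and scanning passes over node records, where a record holds $\ell(v)$, $\from(v)$, $to(v)$, $rank(v)$ and the flag $sorted(v)$. The \emph{doubling} step first handles prefix-sorted $u$ by duplicating them with $rank = (rank(u),0)$; it then realizes the two database joins by sorting the non-prefix-sorted records by $to(\cdot)$, sorting all records by $\from(\cdot)$, and merging these lists through the adjacency lists of $A_{0}$ --- which are part of the stored input $A$ --- to emit one record $uv$ per pair with $(to(u),\from(v)) \in E_{0}$, carrying $\ell(uv)=\ell(u)$, $\from(uv)=\from(u)$, $to(uv)=to(v)$, and the pair $rank(uv)=(rank(u),rank(v))$. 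Every record emitted survives into $V_{i+1}$ (pruning only merges, never creates), so the number of records produced --- and, using that $A_{0}$ is reverse deterministic so that every node of $A_{0}$ occurs as $\from(\cdot)$ of some record, the size of the intermediate relation in the join --- is $O(\abs{V'})$. Re-encoding the pair-ranks as integers is one more sort of these $O(\abs{V'})$ records followed by a linear scan, and the \emph{pruning} step is a sort by $rank(\cdot)$ plus a scan that, inside each rank class, merges the records if they all agree on $\from(\cdot)$. Each pass is thus a comparison sort or a scan of $O(\abs{V'})$ records with $O(\log\abs{V'})$\nobreakdash-bit keys, taking $O(\abs{V'}\log\abs{V'})$ time; multiplying by the $O(\log n)$ iterations (and absorbing the $O(\abs{V}\log\abs{V})$ cost of initializing the records for $A_{0}$) gives the stated time bound.

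For the space bound I would account the fields of a node record: $\from(v)$ and $to(v)$ are pointers into the node set of $A$, hence $O(\log\abs{V})$ bits, and $\abs{V}\le\abs{V'}$ since $V_{0}$ is among the vertex sets occurring during construction; $rank(v)$ ranges over at most $\abs{V'}$ distinct values (and is only transiently a pair of such values during a doubling step), so $O(\log\abs{V'})$ bits; $\ell(v)\in\Sigma\cup\set{\#,\$}$ costs $O(\log\sigma)=O(\log\abs{V'})$ bits; and $sorted(v)$ is one bit. With $O(\abs{V'})$ records live simultaneously, and the edge list of $A_{0}$ charged to the input automaton $A$, the working space is $O(\abs{V'}\log\abs{V'})$ bits.

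The step I expect to be the main obstacle is pinning down the cost of the two database joins: one has to argue that they execute as a constant number of sorts and merge passes whose intermediate relations never exceed $O(\abs{V'})$ in size --- which is where reverse determinism of $A_{0}$ enters, guaranteeing that no work is wasted on sources whose out-neighbours contribute no surviving joined node --- so that the per-step time really is $O(\abs{V'}\log\abs{V'})$ with no leftover $\abs{E}$ term. The iteration count, the maintenance of $\from$ and $to$ under pruning, and the bit-width accounting are then routine.
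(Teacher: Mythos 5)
Your proof takes essentially the same route as the paper's, but fleshes it out considerably: both bound the iteration count by $\lceil \log n \rceil$, implement each doubling-plus-pruning step as a constant number of sorts and scans of $O(\abs{V'})$ node records with $O(\log\abs{V'})$-bit keys, and charge the storage of $A$ and its edge list separately. The paper's own proof is terse to the point of hand-waving here --- it simply asserts that the steps can be implemented ``by scanning and sorting the nodes several times'' and that this costs $O(\abs{V'}\log\abs{V'})$ per step --- so the detail you supply about realizing the join as sort-merge passes, about pair-ranks being re-encoded as integers by one extra sort, and about bit-widths of the record fields is a genuine addition rather than a different argument. Your closing worry is also well placed: the one thing the paper's proof leaves entirely implicit is why the merge pass through $E_0$ doesn't introduce an $\abs{E_0}$ term not covered by $\abs{V'}\log\abs{V'}$. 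Your resolution --- every node of $A_0$ appears as $\from(\cdot)$ of some current node, so every intermediate $(u,w)$ pair extends to at least one surviving joined node, bounding the merge work by the output size $\abs{V_{i+1}} \le \abs{V'}$ --- is a reasonable way to close that gap, though note it relies on a coverage property of the construction (every node of $A_0$ is reachable as a $\from$-value at every stage) rather than on reverse determinism as such; reverse determinism would alternatively give $\abs{E_0} = O(\sigma\abs{V_0})$, which for constant $\sigma$ also suffices. Either way, you have identified and patched the only soft spot in the paper's argument.
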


\begin{proof}
We can implement the doubling and pruning steps by scanning and sorting the nodes several times. Hence each step requires at most $O(\abs{V'} \log \abs{V'})$ time. As we need at most $\lceil \log n \rceil$ doubling steps to get a prefix-sorted automaton, the time bound follows. The space bound of $O(\abs{V'} \log \abs{V'})$ bits is the space required to store $\abs{V'}$ nodes. \qed
\end{proof}

\subsection{Creating the edges}

Let $A = (V, E)$ be a reverse deterministic automaton recognizing a finite language, and let $W$ be the set of nodes of an equivalent prefix-sorted automaton. To create the edges, we first merge nodes with adjacent $rank(\cdot)$ values, if they share their $\from(\cdot)$ node. The resulting set $V'$ is the set of nodes of a prefix-range-sorted automaton $A' = (V', E')$ equivalent to automaton $A$. The set of edges $E'$ can be constructed efficiently from automaton $A$ and the set of nodes $V'$.

\begin{claim}
For every node $v \in V'$, we have $\set{ \from(u) \mid (u, v) \in E' } = \set{ u \mid (u, \from(v)) \in E }$. Furthermore, there are no edges $(u, v), (u', v) \in E'$ such that $\from(u) = \from(u')$.
\end{claim}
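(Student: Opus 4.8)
The plan is to analyze the structure of the prefix-range-sorted automaton $A' = (V', E')$ by tracing how its nodes and edges relate to those of the original reverse deterministic automaton $A$. Recall that each node $v \in V'$ carries a $\from(\cdot)$ value pointing to a node of $A$: specifically, $\from(v)$ is the first node of the path $P(v)$ of $A$ whose label is the common prefix $p(v)$ by which $v$ is (prefix-range-)sorted. The edges $E'$ are defined exactly so that $A'$ is equivalent to $A$ and the language is preserved; the natural definition (consistent with the construction of the earlier lemmas and with the ``doubling'' intuition) is that $(u, v) \in E'$ precisely when $P(u)$ followed by $P(v)$ forms a valid path in $A$, i.e.\ when $(\from(\text{end of }P(u)), \from(v)) \in E$, which because paths start at their $\from(\cdot)$ node amounts to saying $(\from(u)', \from(v)) \in E$ for the appropriate endpoint. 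The first thing I would do is pin down this edge definition carefully and restate it as: $(u,v) \in E'$ iff there is an edge in $E$ entering $\from(v)$ whose source is the $\from$-image of $u$ (equivalently, $u$'s representative path in $A$ connects to $v$'s).

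With that in hand, the first equality $\set{ \from(u) \mid (u, v) \in E' } = \set{ u \mid (u, \from(v)) \in E }$ is essentially unwinding definitions. For the inclusion $\subseteq$: if $(u,v) \in E'$ then by the edge definition there is an edge of $E$ from (the relevant endpoint of $P(u)$, which has $\from$-value $\from(u)$) into $\from(v)$, so $\from(u)$ lies in the right-hand set. For $\supseteq$: given $w$ with $(w, \from(v)) \in E$ in the original automaton, every node of $A$ lies on some path, so $w$ belongs to $P(v')$ for some node $v' \in V'$ with $\from(v') = \ldots$; more directly, since the node set $V'$ of the prefix-range-sorted automaton was built to cover all of $A$, there is a unique $u \in V'$ whose representative path begins at $w$ (or has $w$ as the appropriate endpoint), and then by construction $(u,v) \in E'$, giving $w = \from(u)$ in the left-hand set. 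I would make the ``unique $u$'' claim precise using that the $\from(\cdot)$ values partition/cover $V$ appropriately, which follows from the merging step that produced $V'$ from $W$.

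For the second assertion — no two edges $(u,v), (u',v) \in E'$ with $\from(u) = \from(u')$ — I would argue by the reverse determinism of $A$ together with prefix-range-sortedness. Suppose $(u,v)$ and $(u',v)$ are both in $E'$ with $\from(u) = \from(u') =: w$. Both $u$ and $u'$ are sorted by prefixes $p(u)$, $p(u')$ that begin with the label $\ell(w)$ and continue along paths leaving $w$; since $(u,v),(u',v) \in E'$, both these paths extend by entering $\from(v)$ and then continuing along $P(v)$, so both $p(u)$ and $p(u')$ contain the string $\ell(w)\,p(v)$ as... the key point is that suffixes recognized from $u$ and from $u'$ would then overlap in lexicographic range (both ranges would contain a suffix of the form $\ell(w)\,p(v)\,(\ldots)$), contradicting prefix-range-sortedness unless $u = u'$. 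I expect this is where the real care is needed: one has to rule out that $u$ and $u'$ are genuinely distinct nodes whose ranges happen not to overlap despite sharing $\from$-value and sharing a common successor $v$ — the resolution is that sharing $\from(u) = \from(u')$ means they were \emph{not} merged during the final merging step only if their $rank(\cdot)$ values differed, but having a common outgoing edge to $v$ forces their path-label prefixes (hence ranges) to be nested or overlapping, so they must in fact coincide. I would therefore structure the second part as: assume $u \neq u'$, derive that $rng(u) \cap rng(u') \neq \emptyset$ from the shared $\from$-node and shared successor, and invoke the definition of prefix-range-sorted to conclude $u = u'$, a contradiction. The main obstacle is handling the edge case where $p(v)$ (or $p(u)$) is the full remaining suffix rather than a proper prefix, i.e.\ paths reaching $v_{\abs{V}}$; there one checks the $\$$-terminated strings directly, and reverse determinism of $A$ at $\from(v)$ guarantees the predecessor structure is still a set (no multiplicities), which is exactly what the claim records.
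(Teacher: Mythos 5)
The paper does not actually prove this claim; like the other numbered claims in Sect.~5, it is asserted without proof as a fact the construction relies on, so there is no paper proof to compare against --- the question is whether your proposal is sound on its own terms.

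Your overall plan is the right one, and the argument for the second assertion is essentially correct at its core: if $(u,v),(u',v)\in E'$ with $\from(u)=\from(u')=w$, then $\ell(u)=\ell(u')=\ell(w)$, and for any suffix $S$ recognized from $v$ the suffix $\ell(w)S$ is recognized from both $u$ and $u'$; since $\ell(w)S\in rng(u)$, prefix-range-sortedness forbids its recognition from $u'\ne u$, forcing $u=u'$. Reverse determinism plays no role here, and the digression about the merging step eliminating only adjacent-rank duplicates is a red herring --- the ``same suffix recognized from two nodes'' contradiction does all the work directly. The genuine gap is in the $\supseteq$ inclusion of the first assertion. You write that ``there is a unique $u\in V'$ whose representative path begins at $w$,'' but this is false as stated: many nodes of $V'$ can share $\from(\cdot)=w$, which is precisely what the construction produces by splitting a node of $A$ into several nodes of $V'$. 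The correct route is: take a suffix $S$ recognized from $v$; then $\ell(w)S$ is a valid suffix of the language, so since $L(A')=L(A)$ it is recognized from some $u\in V'$ with $\ell(u)=\ell(w)$, and prefix-range-sortedness of $V'$ forces the continuation $S$ to go through $v$ specifically, hence $(u,v)\in E'$. To conclude $\from(u)=w$, observe that $\from(u)$ and $w$ are both predecessors of $\from(v)$ in $E$ with label $\ell(w)$, and \emph{reverse determinism of $A$} makes that predecessor unique. So reverse determinism is where you say it isn't and isn't where you say it is: it is the key ingredient for part one of the claim, while part two needs only prefix-range-sortedness.
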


We generate the incoming edges $(u, v) \in E'$ as a list of pairs $(\from(u), v)$, sorted by $(\ell(\from(u)), rank(v))$.

\begin{claim}
Edges sorted by $(\ell(\from(u)), rank(v))$ are also sorted by $rank(u)$.
\end{claim}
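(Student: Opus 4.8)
The plan is to recognise that sorting the edge list by $(\ell(\from(u)), rank(v))$ is nothing but sorting the edges $(u,v) \in E'$ by their \emph{implicit label}, and that in a prefix-range-sorted automaton this order necessarily lists the edges grouped by source node, with the groups appearing in increasing $rank(u)$. First I would note that $\ell(u)$ equals $\ell(\from(u))$, since the label of a node of the prefix-sorted automaton is the label of the first node of the path $P(u)$ it represents in $A$, and that $rank(v)$ is exactly the position of $v$ in the ordering induced by the ranges $rng(v)$. Thus comparing the pairs $(\ell(\from(u)), rank(v))$ is the same as comparing the implicit edge labels $\ell(u)\,p(v)$ lexicographically (first the leading character, then the $v$-part), so the two sortings coincide.

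Next I would isolate the two facts about a prefix-range-sorted automaton that carry the argument. \emph{(i)} Every suffix of $L(A)$ recognised from a node $u$ begins with $\ell(u)$, so $rng(u)$ is contained in the lexicographic block of strings starting with $\ell(u)$; since the ranges $rng(\cdot)$ are pairwise disjoint, the $rank$-ordering of nodes first groups them by label and, within one label, orders them by $rng(u)$. \emph{(ii)} For an edge $(u,v) \in E'$, the suffixes recognised from $u$ whose first step is this edge are exactly $\ell(u)\cdot\set{S : S\text{ recognised from }v}$, a nonempty set contained in $rng(u)$; and because $S \mapsto \ell(u)S$ is an order-isomorphism onto the $\ell(u)$-block, if $u$ and $u''$ carry the same label and $rng(u)$ lies entirely below $rng(u'')$, then for every out-edge $(u,v)$ and every out-edge $(u'',v'')$ we must have $rng(v)$ entirely below $rng(v'')$ (their $\ell(u)$-shifts sit inside $rng(u)$ and $rng(u'')$, respectively). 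So, within one source-label class, the secondary key $rank(v)$ is monotone with respect to the coarser key ``$rank$ of the source''. A side remark I would add is that distinct edges cannot receive equal keys: if $(u,v)$ and $(u',v)$ had $\ell(u)=\ell(u')$, the set $\ell(u)\cdot\set{S : S\text{ recognised from }v}$ would lie in both $rng(u)$ and $rng(u')$, forcing $u=u'$.

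Putting the pieces together: sorting the edges by $(\ell(\from(u)), rank(v))$ first stratifies them by the source label $\ell(u)$, which by (i) is exactly the coarsest stratification of the $rank(u)$-ordering of nodes; and within one label class, the secondary key $rank(v)$ by (ii) already respects the grouping by source node taken in $rng(u)$-order, i.e.\ in $rank(u)$-order among those nodes. Consequently the full list is non-decreasing in $rank(u)$, and by the side remark no two edges tie on the key, so nothing is lost. (The earlier claim $\set{\from(u) \mid (u,v)\in E'} = \set{u \mid (u,\from(v))\in E}$ enters only to guarantee that the pairs $(\from(u),v)$ faithfully enumerate $E'$; the sorting statement itself needs nothing beyond prefix-range-sortedness.)

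The main obstacle, I expect, is not depth but care at exactly one point: turning ``the ranges $rng(\cdot)$ do not overlap'' into the monotonicity in (ii), namely that sorting the edges by $rng(v)$ automatically refines the grouping by source node in $rng(u)$-order. This rests on $S\mapsto\ell(u)S$ being order-reflecting together with disjointness of the ranges. If one instead tried to argue directly with single prefixes $p(v)$ in a merely prefix-sorted automaton, one would have to establish that $p(u)$ is a prefix of the implicit label $\ell(u)\,p(v)$, which has irritating boundary cases (the implicit label can be a proper prefix of $p(u)$ when $u$ has a single successor); phrasing everything in terms of ranges is precisely what makes those cases evaporate.
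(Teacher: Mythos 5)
The paper states this claim without proof, so there is no ``paper's own proof'' to compare against; evaluating your argument on its own merits, it is correct and is almost certainly the argument the authors had in mind. The two load-bearing observations --- (i) that the $rank$-ordering of nodes of $A'$ refines the ordering by node label, because every suffix recognised from $u$ begins with $\ell(u)$ and the ranges $rng(\cdot)$ are pairwise disjoint intervals, and (ii) that within a fixed source label $c$ the map $S \mapsto cS$ is an order-isomorphism onto the $c$-block, so containment $c\cdot\set{S : S \text{ recognised from } v} \subseteq rng(u)$ transfers the order on $rng(u)$'s to the order on $rng(v)$'s --- are exactly what the claim needs, and your contrapositive deployment of (ii) (``$rank(u_1) > rank(u_2)$ would force $rank(v_1) > rank(v_2)$'') closes the case cleanly. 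The side remark about ties, via prefix-range-sortedness of $u$ applied to the common suffix $cS$, is also correct and is needed to justify that the subsequent scan recovers each source node unambiguously. Your instinct to phrase everything through $rng(\cdot)$ rather than $p(\cdot)$ is the right one: it sidesteps the boundary case where $\ell(u)p(v)$ is a proper prefix of $p(u)$ (single-successor nodes) and also works for merely prefix-range-sorted nodes, which is the setting after the merge step that produces $V'$.

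One small point worth making explicit, since you lean on it twice: the paper's definition of prefix-range-sorted says no \emph{suffix} in $rng(v)$ is recognised from another node; the disjointness of the intervals $rng(\cdot)$ that you use is a short consequence (the endpoints of $rng(v)$ are attained by, or are limits of, recognised suffixes, so an interval overlap would place a recognised suffix of one node inside the range of the other), but it is a step, not the definition itself. With that sentence added, the proof is complete.
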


We replace each pair $(\from(u), v)$ with edge $(u, v)$ by scanning the sorted lists of nodes and edges. As every node (except the final node) has at least one outgoing edge, and no adjacent nodes share their $\from(\cdot)$ value, all adjacent edges with the same $\from(\cdot)$ value start from the current node. When the $\from(\cdot)$ value changes, we advance to the next node in the list.

\begin{lemma}\label{lemma:edge construction}
Creating the edge of prefix-range-sorted automaton $A'$ takes $O(\abs{W} + \abs{E'})$ time and requires $O(\abs{W} \log \abs{W} + \abs{E'} \log \abs{E'})$ bits of space.
\end{lemma}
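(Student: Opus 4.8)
The plan is to establish Lemma~\ref{lemma:edge construction} by tracking the work done in the edge-construction procedure just described, phase by phase, and checking that each phase is either a scan or a sort over a set whose size is $O(\abs{W})$ or $O(\abs{E'})$.

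First I would handle the initial merging step, in which nodes of $W$ with adjacent $rank(\cdot)$ values are merged when they share their $\from(\cdot)$ node. Since the nodes of $W$ can be kept sorted by $rank(\cdot)$ (this order is maintained from the node-construction phase of Lemma~\ref{lemma:node construction}), detecting and performing the merges is a single linear scan over $W$, costing $O(\abs{W})$ time; the result $V'$ has $\abs{V'} \le \abs{W}$ nodes. Next I would account for generating the incoming edges as pairs $(\from(u), v)$. By the first claim preceding the lemma, for each $v \in V'$ the set of such pairs is in bijection with $\set{u \mid (u, \from(v)) \in E}$, so the total number of pairs generated is $\sum_{v \in V'} in_{A}(\from(v)) \le \abs{E}$ — but more to the point, each such pair becomes an edge of $E'$, so there are exactly $\abs{E'}$ pairs. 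Producing them amounts to looking up, for each node $v \in V'$, the incoming edges of $\from(v)$ in $A$; with $A$ stored so that incoming-edge lists are accessible, this is $O(\abs{V'} + \abs{E'})$ time. Sorting these $\abs{E'}$ pairs by the key $(\ell(\from(u)), rank(v))$ costs $O(\abs{E'} \log \abs{E'})$ comparisons, which is absorbed into the stated time bound only if we are careful — and here I notice the lemma states $O(\abs{W} + \abs{E'})$ time, so the sort must in fact be a radix/bucket sort rather than a comparison sort. Since the keys are pairs of integers bounded by $\abs{W}$ and $\abs{E'}$, a two-pass radix sort runs in $O(\abs{W} + \abs{E'})$ time, and I would make this explicit.

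Then I would handle the final scan that replaces each pair $(\from(u), v)$ with the actual edge $(u, v)$. The justification that this works is the claim that edges sorted by $(\ell(\from(u)), rank(v))$ are also sorted by $rank(u)$, together with the claim that no two adjacent nodes share their $\from(\cdot)$ value; these together guarantee that the sorted edge list and the sorted node list can be merged in lockstep — advance the node pointer exactly when the $\from(\cdot)$ key changes — so this is one simultaneous linear scan of both lists, costing $O(\abs{W} + \abs{E'})$ time. Summing over the three phases gives the claimed $O(\abs{W} + \abs{E'})$ time. For the space bound, the dominant structures are the list of (up to $\abs{W}$) nodes, each carrying $O(1)$ integer fields of magnitude $O(\abs{W})$, hence $O(\abs{W} \log \abs{W})$ bits, and the list of $\abs{E'}$ edges, each a pair of node identifiers of magnitude $O(\abs{W}) = O(\abs{E'})$, hence $O(\abs{E'} \log \abs{E'})$ bits; adding them yields the stated $O(\abs{W} \log \abs{W} + \abs{E'} \log \abs{E'})$ bits.

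The main obstacle I anticipate is the time bound rather than the space bound: a naive reading of ``sort the edges by $(\ell(\from(u)), rank(v))$'' invites an $O(\abs{E'} \log \abs{E'})$ term that would break the claimed $O(\abs{W} + \abs{E'})$, so the proof hinges on observing that all sort keys are small integers and using radix sort. A secondary point to nail down is exactly what access pattern to $A$ is assumed — specifically that $A$'s adjacency information is organized so that, given a node of $A$, its incoming edges can be enumerated in time proportional to their number — which is consistent with how $A$ was built in Lemma~\ref{lemma:automaton construction} but should be stated. Everything else is bookkeeping justified by the four claims immediately preceding the lemma.
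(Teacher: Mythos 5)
Your proof is correct and follows essentially the same plan as the paper's: merge in a linear scan, generate the $(\from(u),v)$ pairs, produce them in sorted order by a non-comparison sort, and replace pairs with edges by a lockstep scan, with the same accounting for space. The one place you work slightly harder than the paper is the sort: you propose a two-pass radix sort (keyed on $rank(v)$ then on $\ell(\from(u))$), whereas the paper observes that since the nodes of $V'$ are already enumerated in $rank(v)$ order, emitting each node's incoming pairs directly into $\sigma$ buckets by $\ell(\from(u))$ already yields the list sorted by $(\ell(\from(u)),rank(v))$ in one pass — the first radix pass is free. Both yield $O(\abs{W}+\abs{E'})$, so this is a stylistic rather than substantive difference, and your explicit flagging that a comparison sort would overshoot the bound is the right observation.
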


\begin{proof}
Assuming that the set of nodes $W$ is already sorted, merging adjacent nodes takes $O(\abs{W})$ time. Creating the sorted list of pairs $(\from(u), v)$ takes $O(\abs{E'})$ time, as we can output the pairs into $\sigma$ buckets according to $\ell(\from(u))$, and the nodes are already sorted by $rank(v)$. Replacing the pairs with edges takes $O(\abs{E'})$ time. Space complexity comes from storing $A$, $W$, and $E'$. \qed
\end{proof}

\section{Implementation and Experiments}\label{sect:experiments}

We have implemented GCSA in C++, using the components from on our implementation of RLCSA \cite{Maekinen2010}.\footnote{\url{http://www.cs.helsinki.fi/group/suds/gcsa/}}
For each character $c \in \Sigma \cup \set{\#}$, we use a gap encoded bit vector to mark the occurrences 
of $c$ in $\BWT$. Bit vectors $F$ and $M$ are run-length encoded, as they usually consist of long runs of \onebit{}s. 
Bit vector $B$ is gap encoded, while the samples 
are stored using $\lceil \log (id_{\max} + 1) \rceil$ bits each, where $id_{\max}$ is the largest 
sampled value. Block size was set to 32 bytes in all bit vectors.

The implementation was compiled on g++ version 4.3.3. We used a system with 32 gigabytes of memory and two quad-core 2.53 GHz Intel Xeon E5540 processors running Ubuntu 10.04 with Linux kernel 2.6.32 for our experiments. Only one core was used in all experiments, except that GCSA construction used the parallel sorting algorithm provided by the \emph{libstdc++ parallel mode}.

\begin{figure}[t!]
\begin{center}
\includegraphics[width=\textwidth]{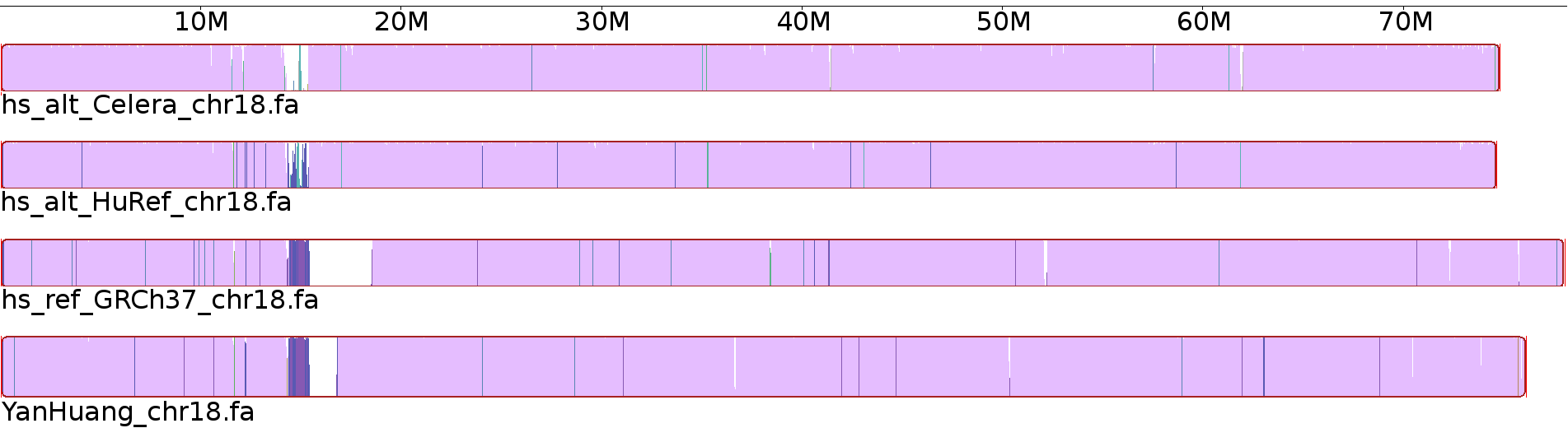}
\caption{Mauve output for the four different versions of chromosome 18.
Regions conserved in all versions are drawn with mauve color (the most prominent color).
The height of the color profile corresponds to the average level of conservation.
White areas could not be aligned and are probably specific to particular genome (or long runs of N).} 
\label{fig:malignment}
\end{center}
\end{figure}

We built a multiple alignment for 
four different assemblies of the human chromosome 18 (about 76 Mbp each).
Three of the assemblies were from 
NCBI\footnote{\url{ftp://ftp.ncbi.nih.gov/genomes/H_sapiens/Assembled_chromosomes/}}:
the assemblies by the Genome Reference Consortium ({\em GRCh37}), Celera Genomics ({\em Celera}), and J. Craig Venter Institute ({\em HuRef}). The fourth sequence\footnote{\url{ftp://public.genomics.org.cn/BGI/yanhuang/fa/}} was from
Beijing Genomics Institute ({\em YanHuang}). The sequences were aligned by the Mauve Multiple Genome Alignment software \cite{Darling2010}.
We ran progressiveMauve 2.3.1 assuming collinear genomes, as we do not support rearrangements.
The multiple alignment took a few hours to build: 
about 89.4\% of nucleotides aligned perfectly, 0.19\% with one or more mismatches, and 10.4\% were inside of a gap.
The number of gaps was high mainly because of the differences in the centromere region.
Fig.~\ref{fig:malignment} gives a rough visualization of the similarity between sequences.


For our experiments, we constructed a GCSA (sample rate $16$) for the alignment with various context lengths $m$. We also built RLCSA (sample rate $32$) and BWA 0.5.8a \cite{LD09} for the four sequences.
We searched for exact matches of 10 million Illumina/Solexa reads of length 56, sequenced from the whole genome, as both regular patterns and reverse complements. Table~\ref{table:construction} lists the results of these experiments. As there were relatively few occurrences inside the selected chromosome, most of the time was spent doing \emph{find}. Hence the sample rate that only affects \emph{locate} had little effect on the overall performance. 
GCSA was 2.5\nobreakdash--3 times slower than RLCSA and 3.5\nobreakdash--4 times slower than BWA. About 1\% of the reads matched by GCSA were not matched by the other indexes. Construction requirements for GCSA were higher than for the other indexes (see Sect.~\ref{sect:discussion} for discussion).

The performance gap between GCSA and RLCSA reflects differences in fundamental techniques, as the implementations share most of their basic components and design choices. Theoretically GCSA should be about 3 times slower, as it requires six bit vector operations per base in \emph{find}, while RLCSA uses just two. The differences between RLCSA and BWA come from implementation choices, as RLCSA is intended for highly repetitive sequences and BWA for fast pattern matching with DNA sequences.

\begin{table}[t!]
\centering
\caption{Index construction and exact matching with GCSA (sample rate $16$), RLCSA (sample rate $32$), and BWA on a multiple alignment of four sequences of human chromosome 18. Times for \emph{locate} include the time used by \emph{find}. GCSA\nobreakdash-$m$ denotes GCSA with context length $m$.}
\label{table:construction}
\renewcommand{\tabcolsep}{.1cm}
\begin{tabular}{lcrcrrcrrr}
\hline\noalign{\smallskip}
 & & & & \multicolumn{2}{c}{{\bf Construction}} & & \multicolumn{3}{c}{{\bf Matching}} \\
{\bf Index} & & {\bf Size} & & {\bf Time} & {\bf Space} & & {\bf Matches} & {\bf Find} & {\bf Locate} \\
\noalign{\smallskip}
\hline
\noalign{\smallskip}
GCSA-2 & &  69.3 MB & & 10 min & 7.0 GB & & 388,873 & 16 min & 20 min \\
GCSA-4 & &  67.2 MB & & 10 min & 6.7 GB & & 388,212 & 16 min & 18 min \\
GCSA-8 & &  64.7 MB & &  9 min & 4.8 GB & & 387,707 & 16 min & 18 min \\
RLCSA  & & 165.0 MB & &  5 min & 2.3 GB & & 384,400 &  6 min &  7 min \\
BWA    & & 212.4 MB & &  4 min & 1.4 GB & & 384,400 &      - &  5 min \\
\noalign{\smallskip}
\hline
\end{tabular}
\end{table}

To test GCSA in a more realistic alignment algorithm, we implemented BWA-like approximate searching \cite{LD09} for both GCSA and RLCSA. There are some differences to BWA: i) we return all best matches; ii) we do not use a seed sequence; iii) we have no limits on gaps; and iv) we have to match $O(|P| \log |P|)$ instead of $O(|P|)$ characters to build the lower bound array for pattern $P$, as we have not indexed the reverse sequence. We used context length $4$ for GCSA, as it had the best trade-off in exact matching.

The results can be seen in Table~\ref{table:comparison}. GCSA was about 2.5 times slower than RLCSA, while finding from 1.0\% (exact matching) to 2.4\% (edit distance $3$) more matches in addition to those found by RLCSA. BWA is significantly faster (e.g. finding 1,109,668 matches with $k=3$ in 40 minutes), as it solves a slighly different problem, ignoring a large part of the search space with biologically implausible edit operations. A fair comparison with BWA is currently impossible without significant amount of reverse engineering. With the same algorithm in all three indexes, the performance differences should be similar as in exact matching.

\begin{table}[t!]
\centering
\caption{Approximate matching with GCSA and RLCSA. The reported matches for given edit distance $k$ include those found with smaller edit distances.}
\label{table:comparison}
\renewcommand{\tabcolsep}{1mm}
{\begin{tabular}{lcrrcrr}
\hline\noalign{\smallskip}
 & & \multicolumn{2}{c}{{\bf GCSA-4}} & & \multicolumn{2}{c}{{\bf RLCSA}}  \\
$\mathbf{k}$ & & {\bf Matches} & {\bf Time} & & {\bf Matches} & {\bf Time} \\
\noalign{\smallskip}
\hline
\noalign{\smallskip}
$0$ & &   388,212 &    18 min & &   384,400 &   7 min  \\
$1$ & &   620,263 &   101 min & &   609,320 &  39 min  \\
$2$ & &   876,228 &   283 min & &   856,373 & 111 min  \\
$3$ & & 1,146,032 & 1,730 min & & 1,118,719 & 721 min  \\
\noalign{\smallskip}
\hline
\end{tabular}}
\end{table}

We also tried to configure BWA to use a more similar algorithm to ours. With no limitations on gaps, BWA found matches for 1,156,013 reads in 257 minutes with edit distance $3$, while the actual edit distance grew past $3$ in some cases. As the exact mechanism BWA uses to handle gaps is unknown, we could not implement it for GCSA and RLCSA.

Finally, we made a preliminary experiment on the SNP calling application mentioned in Sect.~\ref{sect:introduction} using in-house software. 
We called for SNPs from chromosome 18 with minimum coverage 2, using all 10 million reads, as well as only those reads with \emph{no} exact matches on GCSA-4.
The number of called SNPs was 4203 with all reads and 1074 with non-matching ones.
We did not yet compare how much of the reduction can be explained by exact matches on recombinants that would also be found using approximate search on one reference, and how much by more accurate alignment due to richer reference set.

\section{Discussion}
\label{sect:discussion}

Based on our experiments, GCSA is 2.5\nobreakdash--3 times slower than a similar implementation of CSA used in the same algorithm. With typical mutation rates, the index is also not much larger than a CSA built just for the reference sequence. Hence GCSA does not require significantly more resources than a regular compressed suffix array, while providing biologically relevant extended functionality.

While our construction algorithm uses more resources than CSA construction, genomes of up to about 100 Mbp can be indexed on a single workstation in a reasonable time. An external memory implementation should allow us to build an index for the human genome and all known SNPs in less than a day. Extrapolating from current results, the final index should be 2.5--3 gigabytes in size. As a faster alternative for indexing large genomes, we are also working on a distributed construction algorithm in the MapReduce framework \cite{Dean2004}. 

To improve the running time of short read alignment and related tasks, most of the search space pruning 
mechanisms (in addition to the one mechanism we already used from \cite{LD09}) to support approximate matching on top of BWT \cite{LTPS09,LD09,Lietal09,MVLK10} can be 
easily plugged in. Local alignment \cite{Lametal08,LD10} can also be supported. 

As mentioned in Section~\ref{sect:introduction}, an obvious generalization is to index labeled weighted graphs, 
where the weights correspond to probabilities for jumping from one sequence to another in the alignment. This does not increase space usage significantly, as the probabilities differ from $1.0$ only in nodes with multiple outgoing edges. 
In the restricted model analyzed in the Appendix, the extra space requirement is $O(pn \log n)$ bits for a reference sequence of length $n$ and mutation rate $p$.
During the construction of the index, it is also easy to discard paths with small probabilities, given a threshold. This approach can be used e.g. to index recombinants only in the recombination hotspot areas \cite{Mye05}.

The experiments conducted here aimed at demonstrating the feasibility and potential of the approach. 
Once we have the genome-scale implementation ready, we are able to test our 
claims on improving variant calling and primer design accuracy with the index. Both require 
wet-lab verification to see the true effect on reducing false positives.

\section*{Acknowledgements}

We wish to thank Eric Rivals for pointing out recombination hotspots and 
Riku Katainen for running the variation calling experiment.

\bibliographystyle{plain}
\bibliography{paper}

\newpage
\appendix
\section*{Appendix: Expected case analysis}
\label{appendix:size}

In the following, all random choices are independent and identically distributed (IID).

We analyze the size of the automata created by the doubling algorithm in the following model. 
Let $S[1,n]$ be a reference sequence, and let $p$ be the mutation rate. For each 
position $i = 1, \dots, n$, the initial automaton $A$ has a node $u_{i}$ with 
label $\ell(u_{i}) = S[i]$, randomly chosen from alphabet $\Sigma$. With 
probability $p$, there is also another node $w_{i}$ with a random label 
$\ell(w_{i}) \in \Sigma \setminus \set{S[i]}$. The automaton has edges from all nodes at position $i$ to all nodes at position $i+1$ for all $i$.

\begin{definition}
Let $k > 0$ be an integer. A {\em $k$-path} in an automaton is a path of length $k$, or a shorter path ending at the final node.
\end{definition}

Let $k > 0$ be an integer. For any position $i$, let $X_{i,k}$ be the number $k$-paths starting from position $i$ in the reference sequence. If there are $j$ mutated positions covered by these paths, then $X_{i,k} = 2^{j}$, and each of the paths has a different label. The number of mutations is binomially distributed, with the path length and the mutation probability as the parameters. From the moment-generating function for binomial distribution, we get
\begin{equation}\label{eq:expectation}
\Exp{X_{i,k}} = \sum_{j=0}^{k} \Pr(X_{i,k} = j) 2^{j} \le (1+p)^{k}.
\end{equation}
For positions $i = 1, \dots, n - k + 1$, this is an equality.

\begin{lemma}\label{lemma:nodes}
Let $A_{h}$ be a $2^{h}$-sorted automaton equivalent to the original automaton $A$. Then $N(2^{h}) = n(1+p)^{2^{h}} + 2$ is an upper bound for the expected number of nodes in $A_{h}$.
\end{lemma}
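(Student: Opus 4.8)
The plan is to bound the number of nodes of the automaton $A_h$ produced by the doubling algorithm of Section~\ref{sect:prefix-sorting} by charging each node to a $2^h$-path of the original automaton $A = A_0$. Recall that every node $v \in V_h$ carries the first node $\from(v)$ and the last node $to(v)$ of the path $P(v)$ of $A$ realizing its common prefix $p(v, 2^h)$, and that a trivial induction on $h$ shows $\from(v)$ and $to(v)$ are always nodes of $A_0$ (joining sets $\from(uv) = \from(u)$, $to(uv) = to(v)$, and duplication preserves both). I would first establish the structural claim: for a fixed node $x$ of $A_0$, the map $v \mapsto \ell(P(v))$ is injective on $\set{ v \in V_h : \from(v) = x }$. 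Granting this, the number of nodes of $A_h$ whose $\from$ lies at reference position $i$ (that is, equals $u_i$ or $w_i$) is at most the number of distinct labels of $2^h$-paths of $A$ starting at position $i$; since in the model of the Appendix distinct $2^h$-paths from a position carry distinct labels, this number is exactly $X_{i, 2^h}$. The two sentinel nodes (labels $\#$ and $\$$) are already prefix-sorted in $A_0$ by the length-one prefixes $\#$ and $\$$, so each doubling step merely duplicates them and $A_h$ contains exactly one of each, with $\from$ equal to the corresponding $A_0$ node. Hence $\abs{V_h} \le 2 + \sum_{i=1}^{n} X_{i, 2^h}$, and taking expectations and applying \eqref{eq:expectation} with $k = 2^h$ gives $\Exp{\abs{V_h}} \le 2 + \sum_{i=1}^{n} (1+p)^{2^h} = n(1+p)^{2^h} + 2 = N(2^h)$.

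For the structural claim I would induct on $h$. The base case $h = 0$ is immediate, since $A_0$ has at most one node with a given label at each position. For the step, suppose $v \ne v'$ in $V_h$ satisfy $\from(v) = \from(v') = x$ and $\ell(P(v)) = \ell(P(v'))$. Because the model of the Appendix has at most one node per label at each position and joins all nodes of consecutive positions, any path of $A$ is determined by its starting node together with its sequence of labels; hence $P(v) = P(v')$, so $to(v) = to(v')$ and $rank(v) = rank(v')$. Write $v = u_1 u_2$ and $v' = u_1' u_2'$ as the joined nodes produced from $A_{h-1}$ in the doubling step (the cases where one or both are duplicated prefix-sorted nodes are handled the same way, using that a full suffix of $L(A)$ is recognized from a unique node by reverse determinism). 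Splitting $P(v) = P(v')$ into its two halves of length $2^{h-1}$ forces $P(u_1) = P(u_1')$ and $P(u_2) = P(u_2')$, while $\from(u_1) = \from(v) = x = \from(u_1')$; by the induction hypothesis $u_1 = u_1'$, and then $u_2 = u_2'$ as the target of the same joining edge, so $v = v'$, a contradiction. Any two formally distinct copies that survive the doubling step with identical $(\from, to, rank)$ data are in any case identified by the subsequent pruning step, so the claim stands.

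The main obstacle I expect is precisely this injectivity bookkeeping: one must check that neither the doubling nor the pruning step can leave two redundant nodes sharing both their $\from$ value and their length-$2^h$ prefix, which is exactly where the restricted structure of the Appendix model is used essentially (it forces distinct $2^h$-paths from a fixed position to have distinct labels, pinning the per-position count to $X_{i, 2^h}$ rather than only to a product-type bound). Once that is in hand, the remainder is just linearity of expectation together with the estimate already packaged in \eqref{eq:expectation} — and note that only the inequality there is needed, since for $i > n - 2^h + 1$ the relevant path is shorter than $2^h$ and $\Exp{X_{i, 2^h}}$ is strictly below $(1+p)^{2^h}$ — plus the harmless additive $2$ for the sentinels.
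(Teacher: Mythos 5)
Your proof is correct and follows the same charging argument the paper uses: each node of $A_h$ (other than the two sentinels) is charged to a distinct $2^h$-path anchored at the reference position of $\from(v)$, giving $\abs{V_h}\le 2+\sum_i X_{i,2^h}$, and the expectation is then bounded via Eq.~\eqref{eq:expectation}. The paper simply asserts the injectivity (``for every $2^h$-path \dots there is at most one node'') without argument, whereas you supply the inductive proof of it and explicitly track the role of $\from(\cdot)$ and reverse determinism; that is welcome extra detail rather than a different route, though note that for nodes $v$ whose $P(v)$ is shorter than $2^h$ because $v$ became prefix-sorted early you still need the paper's remark that $P(v)$ can be \emph{extended} to a $2^h$-path (uniquely up to the charge, by prefix-sortedness), since $\ell(P(v))$ is then not itself a $2^h$-path label.
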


\begin{proof}
For every $2^{h}$-path starting from a position $i$ in the reference sequence, there is at most one node in automaton $A_{h}$. On the other hand, every node in the automaton, except for the initial and the final nodes, corresponds to a path that can be extended to some such $2^{h}$-path. Hence the total number of nodes is at most $\sum_{i=1}^{n} X_{i,k} + 2$. By Eq.~\ref{eq:expectation}, the expected number of nodes is at most $N(2^{h})$. \qed
\end{proof}

\begin{lemma}\label{lemma:edges}
Let $A_{h}$ be the $2^{h}$-sorted automaton built from automaton $A$. Then $N(2^{h})(1+p)$ is an upper bound for the expected number of edges in $A_{h}$.
\end{lemma}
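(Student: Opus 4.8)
The plan is to bound the number of edges in $A_h$ by the number of (node, outgoing-edge) incidences, and to relate that quantity to the $2^h$-paths already counted in Lemma~\ref{lemma:nodes}. Recall that in the doubling construction every node $v \in V_h$ corresponds to a $2^h$-path $P(v)$ in the original automaton $A$ (or a shorter path ending at the final node), and that an outgoing edge $(v, v') \in E_h$ exists precisely when $P(v)P(v')$ is a path in $A$, i.e. when there is an edge $(to(v), \from(v')) \in E_0$. So the first step is to fix a node $v$ and count its outgoing edges: each such edge is determined by the choice of $\from(v')$ among the out-neighbours of $to(v)$ in $A$, and then by the distinct $2^h$-paths of $A$ that start at that out-neighbour. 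In the model of the Appendix, $to(v)$ sits at some position $i$, and its out-neighbours are exactly the $1$ or $2$ nodes at position $i+1$.

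The second step is to push the count one position forward. If $v$ ends at position $i$, then its outgoing edges correspond to $2^h$-paths of $A$ that begin at position $i+1$; there are $X_{i+1,2^h}$ of these, each with a distinct label, hence at most one per node of $A_h$. Summing over all nodes of $A_h$ and using the correspondence between nodes and $2^h$-paths starting at positions $1,\dots,n$, the total number of edges is at most $\sum_{i=1}^{n} X_{i,2^h} \cdot (\text{number of distinct }2^h\text{-paths continuing each one})$. The cleanest way to organise this: partition the edges of $A_h$ by the position $i$ at which the tail node ends; the edges with tail ending at position $i$ are in bijection with pairs (distinct $2^h$-path ending at $i$, distinct $2^h$-path starting at $i+1$), so their number is at most $X_{i-2^h+1,2^h}\cdot X_{i+1,2^h}$ (with the obvious truncations near the ends). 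Taking expectations and using independence of disjoint stretches of the sequence together with Eq.~\ref{eq:expectation} gives $\Exp{X_{\cdot,2^h}}\le (1+p)^{2^h}$ for the left factor and an extra factor accounting only for the single mutated position at $i+1$, which contributes a multiplicative $(1+p)$. Summing over $i$ and recognising $\sum_i \Exp{X_{i,2^h}} + 2 = N(2^h)$ yields the bound $N(2^h)(1+p)$.

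The cleanest phrasing avoids double-counting: assign each edge $(v,v')$ to its tail $v$, bound the out-degree of $v$ by (out-degree of $to(v)$ in $A$) $\times$ (max number of distinct $2^{h}$-paths sharing a common start vertex), note that a vertex of $A$ at position $i+1$ has at most $X_{i+1,2^{h}}$ such paths, and observe that the branching factor contributed by the step from $v$ to $v'$ is governed solely by whether position $i+1$ is mutated — an event of probability $p$ that is independent of everything determining $v$. Hence $\Exp{\text{out-degree of }v} \le (1+p)\Exp{\mathbf{1}[v \text{ a node}]}$, and summing over the (expected $N(2^h)$ many) nodes of $A_h$ gives the claim.

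The main obstacle is the dependency structure: $X_{i+1,2^h}$ and the existence of a given node $v$ ending at position $i$ are not independent, since both depend on the mutation pattern near position $i+1$. The fix is to be careful about what each factor depends on — the node $v$ is determined by mutations in positions $i-2^h+1,\dots,i$, whereas the out-degree multiplier from $v$ depends on the branching at positions $i+1,\dots,i+2^h$, and crucially the *immediate* branching (whether $v$ splits into two edges at all, beyond the branching already internal to the destination $2^h$-paths) depends only on position $i+1$. Isolating that one-position contribution as the source of the extra $(1+p)$, and folding the rest of the destination branching into the already-counted $N(2^h)$ nodes, is where the argument needs the most care; everything else is a routine application of Eq.~\ref{eq:expectation} and linearity of expectation.
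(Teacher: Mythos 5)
The paper's proof is far more direct than what you attempt: it counts edges by \emph{in}-degree, not out-degree, and uses the structural Claim from Sect.~\ref{sect:construction} that the incoming edges of a node $v$ in $A_h$ are in bijection with the predecessors of $\from(v)$ in $A$. In the appendix model that number is $1$ or $2$ depending solely on whether position $pos(\from(v))-1$ is mutated, an event independent of the existence of $v$ (which is determined by positions $pos(\from(v)),\dots,pos(\from(v))+2^h-1$). Hence $\Exp{in(v)} = 1+p$ conditionally, and summing over all nodes immediately gives the bound. There is no overlap analysis, no ``folding of destination branching'' to worry about.

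Your proposal has two concrete problems. First, you misidentify the edge semantics of $A_h$: you write that $(v,v')\in E_h$ holds precisely when $(to(v),\from(v'))\in E_0$, but that is the condition under which the \emph{joined node} $vv'\in V_{h+1}$ is created in the doubling step of Sect.~\ref{sect:prefix-sorting}, not the condition for an edge of $A_h$. The actual edges, as stated in the Claim of Sect.~\ref{sect:construction}, satisfy $\set{\from(u)\mid (u,v)\in E'} = \set{u\mid (u,\from(v))\in E}$, i.e.\ an edge of $A'$ advances $\from$ by one position in $A$; $P(v)$ and $P(v')$ overlap in $2^h-1$ positions rather than being concatenated. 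Second, your out-degree bound ``$out(v)\le out_A(to(v))\times(\text{number of }2^h\text{-paths from each out-neighbour})$'' over-counts by a factor of roughly $(1+p)^{2^h}$: because $P(v')$ must agree with $P(v)$ on the shared $2^h-1$ positions, there is exactly \emph{one} choice of $v'$ per out-neighbour of $to(v)$, so (for full-length $P(v)$) $out_{A_h}(v)=out_A(to(v))$ with no extra factor. You correctly sense that the surviving $(1+p)$ factor comes from a single boundary position independent of what determines $v$, but the reduction you describe (``folding the rest of the destination branching into $N(2^h)$'') is not a proof step — there is no ``rest,'' and the identification of which position matters is wrong as written. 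If you want an out-degree argument you would need to establish $out_{A_h}(v)=out_A(to(v))$ (including the boundary/prefix-sorted cases) and then use independence of position $pos(to(v))+1$; but the in-degree route used by the paper gets this for free from the construction's Claim.
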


\begin{proof}
The indegree of the initial node of $A_{h}$ is $0$. For every other node $v$, let $pos(v)$ be the position of the reference sequence corresponding to $\from(v)$. If $\from(v)$ is the final node of $A$, then $pos(v) = n+1$. If there is no mutation at position $pos(v) - 1$, then $in(v) = 1$. Otherwise $in(v) = 2$. Hence the expected number of edges is at most $(1+p)$ times the number of nodes. \qed
\end{proof}

Consider the expectation $\Exp{X_{i,k} X_{i',k}}$ for a pair of text positions $i < i'$. If $i' \ge i + k$, then the random variables are independent, and the expectation becomes
\begin{equation}\label{eq:independent}
\Exp{X_{i,k} X_{i',k}} = \Exp{X_{i,k}} \Exp{X_{i',k}} \le (1+p)^{2k}.
\end{equation}
Otherwise assume that the paths starting from positions $i$ and $i'$ overlap in $k' < k$ positions. Then the expectation is a product of the expectations of three independent random variables $X_{i,k-k'}$, $X_{i',k'}^{2}$, and $X_{i'+k',k-k'}$. By using the moment-generating function, we get
\begin{equation}\label{eq:dependent}
\Exp{X_{i,k} X_{i',k}}  \le (1+p)^{2(k-k')} (1+3p)^{k'} \le (1+p)^{3k}.
\end{equation}

\begin{definition}
A pair of nodes of automaton $A_{h}$ {\em collides}, if the corresponding $2^{h}$-paths have identical labels.
\end{definition}

\begin{lemma}\label{lemma:expectation}
Let $A_{h}$ be the $2^{h}$-sorted automaton built from automaton $A$ by using the 
doubling algorithm. The expected number of colliding pairs of nodes in automaton 
$A_{h}$ is at most $C(2^{h}) = n^{2} (1+p)^{3 \cdot 2^{h}} / \sigma^{2^{h}}$.
\end{lemma}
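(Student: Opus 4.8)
The plan is to bound the expected number of colliding pairs by a union bound over all pairs of starting positions $(i, i')$ in the reference, using the second-moment estimates in Equations~\ref{eq:independent} and~\ref{eq:dependent}. First I would observe that, by Lemma~\ref{lemma:nodes} and its proof, every node of $A_h$ (other than the initial and final nodes) corresponds to a $2^h$-path starting from some reference position; a collision is a pair of such nodes whose $2^h$-paths carry identical labels. Since distinct $2^h$-paths starting from the \emph{same} position always have distinct labels (they branch at some mutated position, as noted before Eq.~\ref{eq:expectation}), a colliding pair must come from two distinct positions $i < i'$. So the number of colliding pairs is at most $\sum_{i < i'} Y_{i,i'}$, where $Y_{i,i'}$ counts pairs consisting of one $2^h$-path from $i$ and one from $i'$ that happen to have the same label.

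Next I would estimate $\Exp{Y_{i,i'}}$. Condition on the two path lengths and on which positions inside the two windows are mutated; this fixes how many paths of each kind there are, namely $X_{i,k}$ and $X_{i',k}$ with $k = 2^h$ (truncated near the end of the sequence). For any fixed pair consisting of one path from $i$ and one from $i'$, the two labels are strings of length $k$ over $\Sigma$; on the coordinates where the two windows do \emph{not} overlap the characters are independent uniform draws (from the reference choice or the mutation choice), so the probability that two prescribed such paths have equal labels is at most $\sigma^{-k}$ — actually I only need that the non-overlapping part, of length at least $k$ in total across the two windows minus the overlap, forces a $\sigma^{-(\text{something})}$ factor; the cleanest route is to note the labels agree only if they agree on a set of at least $k$ independent uniform symbols when $i' \ge i+k$, and to handle the overlapping case by the same conditioning that produced Eq.~\ref{eq:dependent}. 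Taking expectations, $\Exp{Y_{i,i'}} \le \Exp{X_{i,k} X_{i',k}} \cdot \sigma^{-k}$ in the disjoint case and an analogous bound with the $(1+3p)^{k'}$ correction in the overlapping case, both of which are dominated using Eqs.~\ref{eq:independent} and~\ref{eq:dependent} by $(1+p)^{3k}/\sigma^{k}$. Summing over fewer than $n^2$ pairs $(i,i')$ gives the claimed bound $C(2^h) = n^2 (1+p)^{3\cdot 2^h}/\sigma^{2^h}$.

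The step I expect to be the main obstacle is getting the $\sigma^{-2^h}$ factor honest in the \emph{overlapping} case, $i' < i + k$. There the two labels share their first several symbols automatically (they read the same window of the reference/mutation structure), so equality of the full labels is \emph{not} an independent event over $2^h$ coordinates — one must argue that equality on the non-shared tail still contributes a $\sigma^{-(k-k')}$ factor, then check that $(1+p)^{2(k-k')}(1+3p)^{k'}\sigma^{-(k-k')} \le (1+p)^{3k}\sigma^{-k}$. This inequality is false term-by-term (the right side has the smaller power of $\sigma$), so the correct claim must instead be that in the overlapping case the number of \emph{pairs of distinct paths from distinct positions} is already controlled — indeed when the windows overlap, a path from $i$ and a path from $i'$ that agree on the overlap are essentially the same data, so the counting collapses and these pairs are few. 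I would therefore treat the overlapping pairs separately, bounding their contribution crudely by $\sum_i \sum_{k'=1}^{k-1} \Exp{X_{i,k}^2} \le n k (1+3p)^{k}$ or similar, and check this is absorbed into $C(2^h)$; making that absorption rigorous — in particular confirming that the dominant $n^2/\sigma^{2^h}$ term genuinely comes only from the well-separated pairs — is where the care is needed.
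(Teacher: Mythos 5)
Your overall strategy --- bounding the expected number of colliding pairs by $\sum_{i<i'}\Exp{X_{i,k}X_{i',k}}/\sigma^k$ with $k = 2^h$, via Equations~\ref{eq:independent} and~\ref{eq:dependent} --- matches the paper's proof, but your treatment of the overlapping case misdiagnoses the obstacle and then proposes a repair that cannot close the gap. You worry that when the two windows overlap, ``the two labels share their first several symbols automatically.'' They do not: the collision event demands $a_j = b_j$ for $j = 1,\dots,k$, where $a_j$ is the character read at position $i+j-1$ and $b_j$ the character read at position $i'+j-1$. Since $i < i'$, \emph{every} one of these $k$ constraints compares characters at two distinct positions, shifted by $d = i'-i > 0$; the constraints link $t \leftrightarrow t+d$ for $t = i,\dots,i+k-1$, which is a union of $d$ disjoint chains on the $k+d$ positions involved, with no cycles. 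Because the reference/mutation characters at distinct positions are independent and uniform (and conditioning on existence of the chosen mutations does not change their distribution), each constraint contributes a factor $1/\sigma$; a short check shows this survives even when the same position is read by the two paths with different choices, so that both $R_t$ and $M_t$ appear in the chain. Hence the collision probability for any fixed pair of potential $k$-paths from $i$ and $i'$ is at most $\sigma^{-k}$ uniformly over all $i<i'$, \emph{including} the overlapping case --- which is exactly what the paper asserts (``the probability of collision of any given pair is $\sigma^{-2^h}$'') and is the one observation your argument is missing.

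Your proposed workaround --- bounding the overlapping pairs separately by something on the order of $\sum_i\sum_{k'}\Exp{X_{i,k}^2}\le nk(1+3p)^k$ --- does not absorb into $C(2^h) = n^2(1+p)^{3\cdot 2^h}/\sigma^{2^h}$: the target carries a $\sigma^{-k}$ factor that your replacement lacks, and for $k = \Theta(\log_\sigma n)$ (precisely the regime in which Theorem~\ref{theorem:nodes} invokes this lemma) your bound exceeds $C(2^h)$ rather than being absorbed by it. Once the uniform $\sigma^{-k}$ collision probability is in place for all pairs, the lemma follows directly from Equations~\ref{eq:independent} and~\ref{eq:dependent} with no separate handling of the overlap.
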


\begin{proof}
If two paths start from the same position in the reference sequence, the corresponding nodes cannot collide. As the colliding paths must be of length $2^{h}$ (otherwise the nodes would be prefix-sorted), the probability of collision of any given pair is $\sigma^{-2^{h}}$. By Equations \ref{eq:independent} and \ref{eq:dependent}, the expected number of colliding pairs is at most
\begin{displaymath}
\sum_{i < i'} \Exp{X_{i,2^{h}} X_{i',2^{h}} / \sigma^{2^{h}}} \le  n^{2} (1+p)^{3 \cdot 2^{h}} / \sigma^{2^{h}}.
\end{displaymath}
The lemma follows. \qed
\end{proof}

\begin{theorem}\label{theorem:nodes}
Let $n$ be the length of the reference sequence, $\sigma$ the size of the alphabet, and $p < \sigma^{1/3} - 1$ the mutation rate. For any $\varepsilon > 0$, the largest automaton created by the doubling algorithm has at most $n(1+p)^{k} + 2$ nodes with probability $1-\varepsilon$, where $k = 2 \log_{\sigma} \frac{n^{2}}{\varepsilon} / (1 - 3 \log_{\sigma} (1+p))$.
\end{theorem}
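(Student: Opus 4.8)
The plan is to follow the doubling algorithm step by step and show that it halts before the doubling parameter $2^{h}$ can grow past $k$, except on an event of probability at most $\varepsilon$. The starting point is the observation that the algorithm stops at the first step $h$ for which $A_{h}$ is prefix-sorted, and that, by the characterisation in Section~\ref{sect:prefix-sorting} (prefix-sorted $\iff$ all $rank(\cdot)$ values distinct) together with the fact used in the proof of Lemma~\ref{lemma:expectation} that a collision can only occur between two non-prefix-sorted nodes, the absence of colliding pairs in $A_{h}$ already forces $A_{h}$ to be prefix-sorted. Hence it suffices to produce a single value of $h$, with $2^{h}$ not much larger than $k$, for which $A_{h}$ is collision-free with probability at least $1-\varepsilon$.

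For that I would apply Markov's inequality to Lemma~\ref{lemma:expectation}: if $Y_{h}$ is the (non-negative, integer-valued) number of colliding pairs in $A_{h}$, then $\Pr(Y_{h}\ge 1)\le\Exp{Y_{h}}\le C(2^{h})=n^{2}(1+p)^{3\cdot 2^{h}}/\sigma^{2^{h}}$. Taking $\log_{\sigma}$ gives $\log_{\sigma}C(2^{h})=2\log_{\sigma}n-2^{h}(1-3\log_{\sigma}(1+p))$, and the hypothesis $p<\sigma^{1/3}-1$ is exactly what makes the factor $1-3\log_{\sigma}(1+p)$ strictly positive. A short computation then shows that $C(2^{h})\le\varepsilon$ precisely when $2^{h}\ge\log_{\sigma}(n^{2}/\varepsilon)/(1-3\log_{\sigma}(1+p))=k/2$. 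Choosing $h^{\star}$ to be the least integer with $2^{h^{\star}}\ge k/2$, we get $2^{h^{\star}-1}<k/2$, hence $2^{h^{\star}}<k$, and $A_{h^{\star}}$ is collision-free with probability at least $1-\varepsilon$.

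It remains to translate this into a bound on node counts. On the complement of the bad event the algorithm has already terminated at some step $h'\le h^{\star}$, so every automaton it ever constructs is one of $A_{0},\dotsc,A_{h'}$. Using the deterministic bound $\abs{V_{j}}\le\sum_{i}X_{i,2^{j}}+2$ from the proof of Lemma~\ref{lemma:nodes} and the monotonicity of $X_{i,k}$ in $k$, each such automaton has at most $\sum_{i}X_{i,2^{h^{\star}}}+2$ nodes, whose expectation is at most $N(2^{h^{\star}})=n(1+p)^{2^{h^{\star}}}+2\le n(1+p)^{k}+2$ by Eq.~\ref{eq:expectation} and $2^{h^{\star}}<k$. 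This yields the stated bound (on the expected size, conditioned on the termination event of probability $1-\varepsilon$); if a genuine high-probability bound on the node count itself is wanted, one applies Markov a second time to $\sum_{i}X_{i,2^{h^{\star}}}$ at the price of a constant factor. The step I expect to be the main obstacle is the first one: nailing down precisely the chain ``$A_{h}$ collision-free $\Rightarrow$ $A_{h}$ prefix-sorted $\Rightarrow$ algorithm terminated by step $h$'' from the claims of Section~\ref{sect:prefix-sorting}, and handling the off-by-one in the choice of $h^{\star}$ (and the degenerate regime $k\le 2$, irrelevant for the genome-scale $n$ of interest); once that is settled, the probabilistic part is a routine Markov-plus-logarithm calculation driven entirely by Lemmas~\ref{lemma:nodes} and~\ref{lemma:expectation}.
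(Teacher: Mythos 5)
Your proposal follows essentially the same route as the paper's own proof: apply Markov's inequality to the colliding-pair count $Y_h$ from Lemma~\ref{lemma:expectation}, solve $C(2^h)\le\varepsilon$ to get the threshold $2^h\ge\log_\sigma(n^2/\varepsilon)/(1-3\log_\sigma(1+p))$, round up to a power of two (which costs at most a factor of two, giving the theorem's $k$), and then invoke Lemma~\ref{lemma:nodes} for the size bound. Your handling of the power-of-two rounding via $h^\star$ is just a more explicit version of the paper's one-line ``As $k$ has to be a power of two\ldots''.

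The genuine value you add is the caveat at the end. You correctly observe that the chain ``$\Pr(\text{not prefix-sorted by step }h^\star)\le\varepsilon$'' plus ``$\Exp{|V_j|}\le N(2^{h^\star})$ for $j\le h^\star$'' yields a bound on the \emph{expected} node count conditioned on the good event, which is not the same as the theorem's claim that the node count itself is at most $n(1+p)^k+2$ with probability $1-\varepsilon$. The paper's proof has exactly this gap and does not acknowledge it; you both flag it and give the right repair (a second application of Markov to $\sum_i X_{i,2^{h^\star}}$, at the cost of a constant factor in the bound or a factor inside the $\varepsilon$-budget). Your observation that monotonicity of $X_{i,k}$ in $k$ is what lets one bound all automata $A_0,\dotsc,A_{h'}$ uniformly by $\sum_i X_{i,2^{h^\star}}+2$ also makes explicit a step the paper leaves implicit. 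In short: same argument, but your reading is more rigorous than the paper's own proof, and the extra Markov step you mention is actually needed for the theorem to hold as stated.
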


\begin{proof}
We want to find $k = 2^{h}$, for an integer $h$, such that the expected number of colliding pairs in automaton $A_{h}$ is at most $\varepsilon$. Then, by Markov's inequality, the probability of having a colliding pair is at most $\varepsilon$. If there are no colliding pairs, then the automaton is prefix-sorted. By Lemma~\ref{lemma:nodes}, if this happens after $h$ doubling and pruning phases, the expected number of nodes in the largest automaton created is at most $N(k) = n(1+p)^{k} + 2$.

By using the bound for the expected number of colliding pairs from Lemma~\ref{lemma:expectation}, we get
\begin{displaymath}
C(k) = \frac{n^{2} (1+p)^{3k}}{\sigma^{k}} \le \varepsilon
\iff
\frac{\log_{\sigma} \frac{n^{2}}{\varepsilon}}{1 - 3 \log_{\sigma} (1+p)} \le k.
\end{displaymath}
As $k$ has to be a power of two, $2 \log_{\sigma} \frac{n^{2}}{\varepsilon} / (1 - 3 \log_{\sigma} (1+p))$ is an upper bound for the smallest suitable $k$. \qed
\end{proof}

\begin{corollary}
For a random reference sequence of length $n$ and mutation rate $p < 0.08$, the expected number of edges in the largest automaton is at most $n(1+p)^{O(\log n)} + O(1)$.
\end{corollary}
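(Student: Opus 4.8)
The plan is to get the edge bound by composing Theorem~\ref{theorem:nodes}, which controls the node count of the largest intermediate automaton, with Lemma~\ref{lemma:edges}, which turns any node bound into an edge bound, after fixing the free parameter $\varepsilon$ appropriately.

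First I would check applicability of Theorem~\ref{theorem:nodes}: for any alphabet of size $\sigma \ge 2$ we have $\sigma^{1/3} - 1 \ge 2^{1/3} - 1 \approx 0.26 > 0.08$, so $p < 0.08$ lies safely in the admissible range, and the quantity $1 - 3 \log_\sigma (1+p)$ is a positive constant bounded away from $0$. Then I would instantiate $\varepsilon$ as a fixed constant (say $\varepsilon = 1/2$; a fixed inverse polynomial in $n$ works equally well). With this choice $\log_\sigma \frac{n^2}{\varepsilon} = 2\log_\sigma n + O(1)$, so the length parameter $k = 2 \log_\sigma \frac{n^2}{\varepsilon} / (1 - 3 \log_\sigma (1+p))$ of Theorem~\ref{theorem:nodes} is $O(\log n)$, with the implied constant depending only on $\sigma$ and $p$. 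Theorem~\ref{theorem:nodes} then tells us that, except on an event of probability $\varepsilon$, the doubling algorithm reaches a prefix-sorted automaton already after $h$ rounds with $2^h = k$, and the largest automaton it builds has at most $N(k) = n(1+p)^k + 2$ nodes.

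The second step converts nodes into edges: by Lemma~\ref{lemma:edges} a $2^h$-sorted automaton has at most $(1+p)\,N(2^h)$ edges in expectation, and evaluating this at the terminal level $2^h = k$ gives at most $(1+p)(n(1+p)^k + 2) = n(1+p)^{k+1} + O(1)$ edges. Since $k = O(\log n)$, also $k + 1 = O(\log n)$, so the bound reads $n(1+p)^{O(\log n)} + O(1)$, as claimed. I would close by noting that the hypothesis $p < 0.08$ is exactly what keeps the exponent of the resulting estimate $n^{1 + (k+1)\log_\sigma(1+p)/\log_\sigma n}$ comfortably below $2$ for the smallest relevant alphabet, so the corollary is a genuinely sub-quadratic bound rather than a vacuous one.

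The one delicate point, which I expect to be the real obstacle, is the gap between the ``with probability $1-\varepsilon$'' guarantee of Theorem~\ref{theorem:nodes} and the word ``expected'' in the corollary. Conditioned on the favourable event everything goes through cleanly: the last automaton produced is the largest one (its node-count upper bound $\sum_i X_{i,2^h} + 2$ is monotone in $h$), it is already prefix-sorted at level $2^h = k = O(\log n)$, and Lemma~\ref{lemma:edges} applied there is exactly the stated bound. An unconditional expectation, however, is \emph{not} reachable this cheaply: on the rare bad event the algorithm can still run for $O(\log n)$ doubling rounds over an almost fully mutated stretch and produce up to $\Theta(n\,2^{n})$ edges, and killing that tail would force $\varepsilon$ so small that $\log_\sigma(1/\varepsilon) = \Theta(n)$, hence $k = \Theta(n)$, which blows the target bound up to exponential. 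I would therefore present the corollary --- consistently with the ``expected case'' framing of the appendix --- as the statement that the largest automaton has $n(1+p)^{O(\log n)} + O(1)$ edges with high probability (equivalently, in expectation conditioned on the algorithm's overwhelmingly likely behaviour), and flag precisely this as the spot where the informal reading and a fully rigorous unconditional expectation diverge.
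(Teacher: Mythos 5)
Your structural decomposition is the same as the paper's --- invoke Theorem~\ref{theorem:nodes} to control the node count of the largest intermediate automaton, then apply Lemma~\ref{lemma:edges} to pick up the extra $(1+p)$ factor for edges --- and your arithmetic for a fixed $\varepsilon$ ($k = O(\log n)$ once $1 - 3\log_\sigma(1+p)$ is bounded away from zero for $p<0.08$) is correct. You have also correctly put your finger on the only nontrivial point: Theorem~\ref{theorem:nodes} is stated ``with probability $1-\varepsilon$'' while the corollary asserts an expectation.

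Where you go wrong is in declaring that gap unbridgeable. You reason as though one must pick a \emph{single} $\varepsilon$ and then pay the worst-case node count on the residual bad event, concluding that this forces $\log_\sigma(1/\varepsilon) = \Theta(n)$ and hence $k=\Theta(n)$. The paper instead \emph{layers} the probability space: it instantiates Theorem~\ref{theorem:nodes} simultaneously for every $\varepsilon_i = (1/n)^i$, $i = 0, 1, 2, \dotsc$, obtaining for each level $i$ a node bound of the form $n(1+p)^{3(2+i)\log n} + 2$ that fails with probability at most $(1/n)^i$. Decomposing the expectation over the annuli between successive levels gives a series of the form $n(1+p)^{9\log n}\sum_{i\ge 0}\bigl((1+p)^{\Theta(\log n)}/n\bigr)^i + O(1)$, which is geometric with ratio strictly less than $1$ precisely because $p<0.08$ keeps $(1+p)^{\Theta(\log n)} = n^{\Theta(\log_\sigma(1+p))}$ sublinear in $n$. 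The bad-event tail is therefore tamed not by making $\varepsilon$ astronomically small but by letting the $i$-th layer's node bound grow only by a factor $(1+p)^{3\log n}$ while its probability shrinks by $1/n$, so the products are summable. Your proposed retreat to a high-probability statement throws away exactly the part of the argument the appendix is set up to deliver. (It is a fair separate observation that the paper's Theorem~\ref{theorem:nodes} itself conflates an expected node bound with a pointwise one inside a probability statement, so the corollary inherits that looseness --- but that is a flaw in the cited theorem, not a reason the expectation bound is unobtainable in principle, and your $\Theta(n2^n)$ worst case is irrelevant because its probability is superpolynomially small and is absorbed by the same geometric series.)
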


\begin{proof}
For $p < 0.08$, the $k$ in Theorem~\ref{theorem:nodes} is at most $3 \log \frac{n^{2}}{\varepsilon}$. By selecting $\varepsilon = \left( \frac{1}{n} \right)^{i}$, we get node bound $n(1+p)^{3 \cdot (2+i) \log n} + 2$ with probability $1 - \varepsilon$. Hence the expected number of nodes is at most
\begin{displaymath}
\begin{split}
n(1+p)^{9 \log n} \sum_{i=0}^{\infty} \left( \frac{(1+p)^{\log n}}{n} \right)^{i} + 2 \le
n(1+p)^{O(\log n)} + 2.
\end{split}
\end{displaymath}
By Lemma~\ref{lemma:edges}, the expected number of edges is at most $(1+p)$ times that. \qed
\end{proof}

\end{document}